\documentclass[sigconf, nonacm]{acmart}

\newcommand\vldbdoi{XX.XX/XXX.XX}
\newcommand\vldbpages{XXX-XXX}
\newcommand\vldbvolume{14}
\newcommand\vldbissue{1}
\newcommand\vldbyear{2020}
\newcommand\vldbauthors{\authors}
\newcommand\vldbtitle{\shorttitle} 
\newcommand\vldbavailabilityurl{https://github.com/tarlaun/hifive}
\newcommand\vldbpagestyle{plain} 

\usepackage{etoolbox}
\AtBeginEnvironment{table}{\footnotesize}

\usepackage{amsmath}
\usepackage{subcaption}

\usepackage{pgfplots}
\usepackage{pgfplotstable}
\usepgfplotslibrary{groupplots}
\usepgfplotslibrary{statistics} 
\pgfplotsset{compat=1.18}       
\usepackage{tikz}

\graphicspath{{},{figs/}}

\usepackage{algorithm}
\usepackage[noend]{algpseudocode}
\usepackage{booktabs}
\usepackage{multirow}
\usepackage{siunitx}
\usepackage{enumitem}

\usepackage{hyperref}
\usepackage{xspace}
\usetikzlibrary{matrix}
\usepackage{xstring}
\usetikzlibrary{patterns}

\usepackage{balance}

\newcommand{\hifive}{HiFIVE\xspace}

\newcommand{\mycomment}[1]{}

\title{\hifive: High-Fidelity Vector-Tile Reduction for Interactive Map Exploration}

\author{Tarlan Bahadori}
\affiliation{%
  \institution{University of California, Riverside}
  \city{Riverside}
  \country{USA}
}
\email{tbaha001@ucr.edu}

\author{Ahmed Eldawy}
\affiliation{%
  \institution{University of California, Riverside}
  \city{Riverside}
  \country{USA}
}
\email{eldawy@ucr.edu}
\begin{abstract}
Interactive visualization is a common tool for exploring large open-data repositories, where users quickly explore datasets across diverse domains. When it comes to large-scale spatial data, many existing tools rely on server-side rendering to produce small images that can be viewed at the client-side. However, most users prefer client-side rendering that allows quick styling of the data for better visualization experience.
This paper presents \hifive, a data-management framework for scalable, high-fidelity client-side geospatial visualization. We formalize the visualization-aware tile reduction problem, which captures the trade-off between tile-size and visualization distortion, and prove its NP-hardness. \hifive introduces a two-stage solution combining triage and sparsification to selectively prune records, attributes, and values based on information-theoretic and spatial criteria. Experiments demonstrate substantial tile-size reductions while preserving visual fidelity and interactive performance at terabyte scale.
\end{abstract}

\begin{document}

\maketitle
\mycomment{
\pagestyle{\vldbpagestyle}
\begingroup\small\noindent\raggedright\textbf{PVLDB Reference Format:}\\
\vldbauthors. \vldbtitle. PVLDB, \vldbvolume(\vldbissue): \vldbpages, \vldbyear.\\
\href{https://doi.org/\vldbdoi}{doi:\vldbdoi}
\endgroup
\begingroup
\renewcommand\thefootnote{}\footnote{\noindent
This work is licensed under the Creative Commons BY-NC-ND 4.0 International License. Visit \url{https://creativecommons.org/licenses/by-nc-nd/4.0/} to view a copy of this license. For any use beyond those covered by this license, obtain permission by emailing \href{mailto:info@vldb.org}{info@vldb.org}. Copyright is held by the owner/author(s). Publication rights licensed to the VLDB Endowment. \\
\raggedright Proceedings of the VLDB Endowment, Vol. \vldbvolume, No. \vldbissue\ %
ISSN 2150-8097. \\
\href{https://doi.org/\vldbdoi}{doi:\vldbdoi} \\
}\addtocounter{footnote}{-1}\endgroup
}
\ifdefempty{\vldbavailabilityurl}{}{
\vspace{.3cm}
\begingroup\small\noindent\raggedright\textbf{Artifact Availability:}\\
The source code, data, and/or other artifacts have been made available at \url{\vldbavailabilityurl}.
\endgroup
}

\section{Introduction}


Interactive visualization is one of the fastest, and sometimes the only, means to determine what a dataset actually contains and assess its fitness to answer certain scientific questions. With hundreds of thousands of publicly available datasets~\cite{data_gov,world_bank_open_data,undata}, users must be able to scan, compare, and eliminate candidates quickly in order to select the right data for the task at hand. Moreover, multiple reports estimate that roughly 60\% of publicly available data carries a geospatial component, highlighting the central role of maps as a primary interface for exploration across a wide range of domains, including urban planning~\cite{DZM+18,GFC+18}, public health~\cite{CIZ+22}, environmental monitoring~\cite{KZM+19}, and transportation~\cite{YLT+15}.


Scaling geospatial exploration to today's data volumes remains difficult. Although web maps use multi-resolution tile pyramids for interactive pan-and-zoom~\cite{aidstar,hadoopviz,ucrstar}, existing systems still face a core trade-off between scalability and visualization fidelity. Existing approaches can be largely classified as \emph{server-side rendering} or \emph{client-side rendering}. In \emph{server-side rendering}, the server generates and styles \emph{raster tiles}, which scale well because computation happens on the back end and tile size is bounded by pixel resolution~\cite{aidstar,geosparkviz,JLZ+16}. However, the visualization is effectively fixed, and the client cannot restyle or recover details that were not encoded in the pixels. In \emph{client-side rendering}, the server delivers \emph{vector tiles}~\cite{10.14778/3750601.3750690,tremmel2025mlt,GFC+18,SGM+13}, i.e., geometries and attributes, and the client performs styling, enabling high-fidelity, fully customizable maps. Yet vector tiles are not naturally size-bounded. Payload can grow arbitrarily with feature density, geometry complexity, and attribute cardinality, often exceeding practical network and browser rendering budgets. Some work has been done to use dictionary-style encoding in Mapbox Vector Tiles (MVT)~\cite{mvt} or column compression in Maplibre Tiles (MLT)~\cite{tremmel2025mlt}, but they do not address the unbounded tile size problem.

\begin{figure*}[t]
    \centering

    \begin{subfigure}[t]{0.33\textwidth}
        \centering
        \includegraphics[height=3.6cm,keepaspectratio]{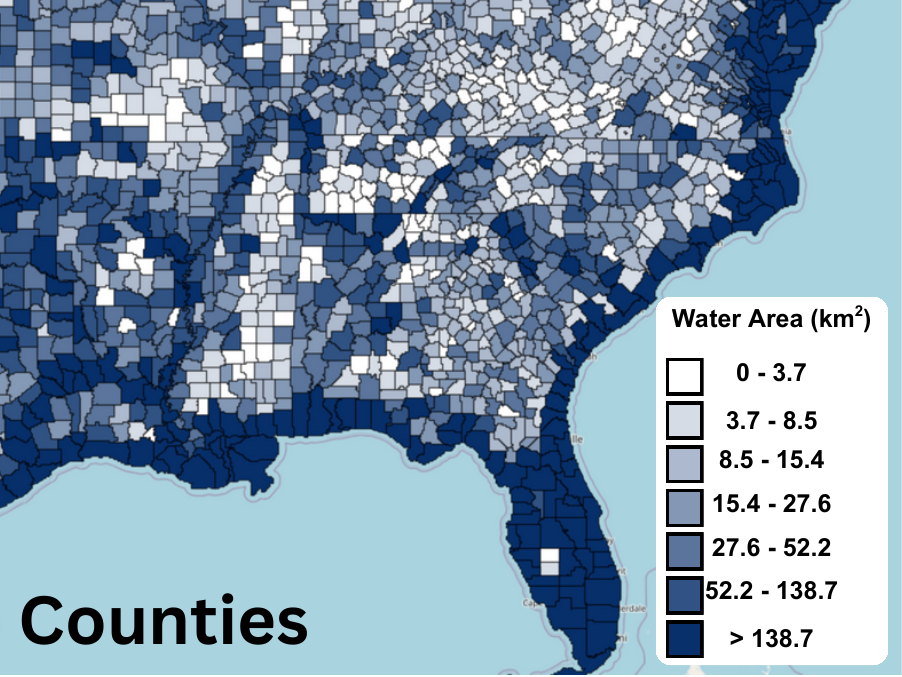}
        \caption{Counties color-coded by water area}
        \label{fig:counties-awater}
    \end{subfigure}\hfill
    \begin{subfigure}[t]{0.33\textwidth}
        \centering
        \includegraphics[height=3.6cm,keepaspectratio]{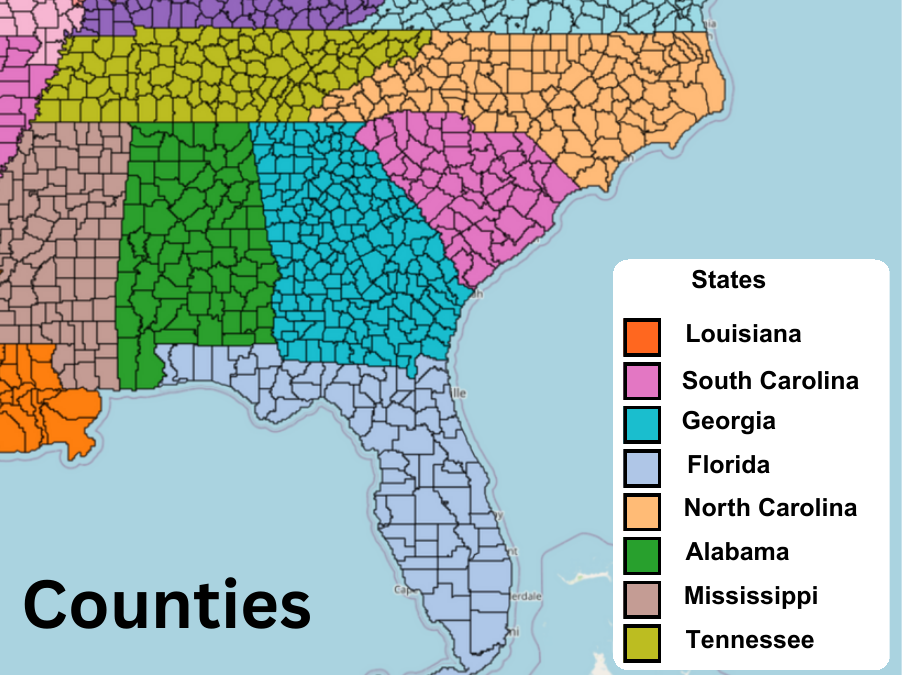}
        \caption{Counties color-coded by state (\hifive)}
        \label{fig:counties-hi5}
    \end{subfigure}\hfill
    \begin{subfigure}[t]{0.33\textwidth}
        \centering
        \includegraphics[height=3.6cm,keepaspectratio]{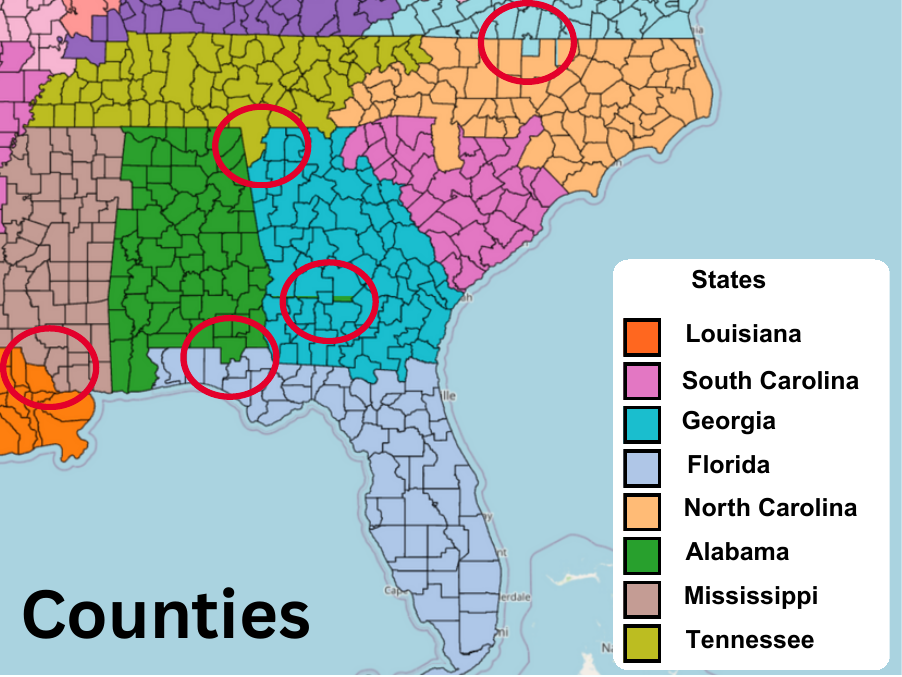}
        \caption{Counties colored by state (Tippecanoe~\cite{tippecanoe})}
        \label{fig:counties-tip}
    \end{subfigure}

    \vspace{0.8ex}

    \begin{subfigure}[t]{0.33\textwidth}
        \centering
        \includegraphics[height=3.6cm,keepaspectratio]{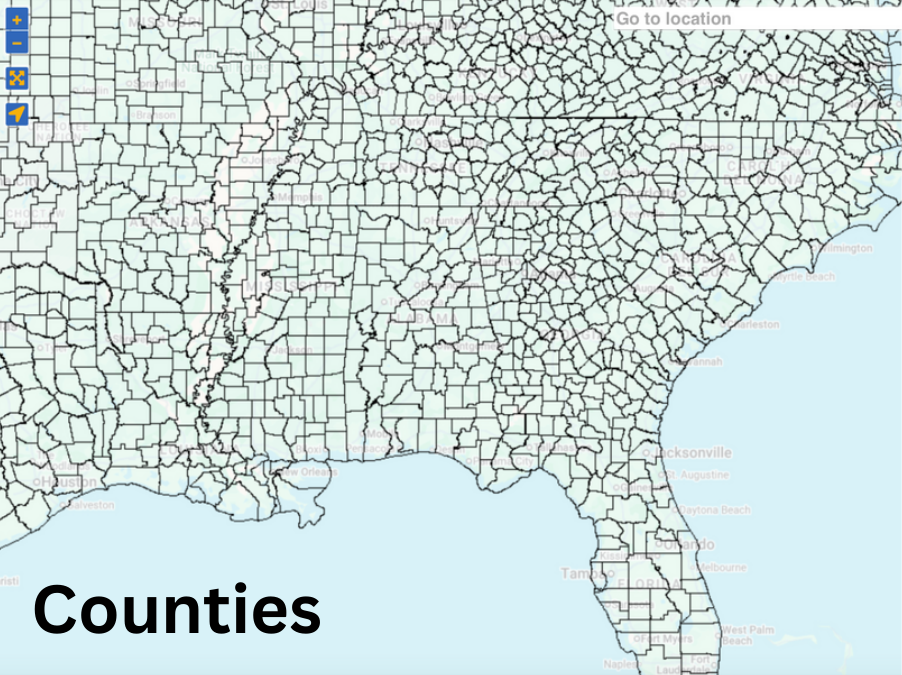}
        \caption{Unstyled counties (UCR-STAR~\cite{ucrstar})}
        \label{fig:counties-aid}
    \end{subfigure}\hfill
    \begin{subfigure}[t]{0.33\textwidth}
        \centering
        \includegraphics[height=3.6cm,keepaspectratio]{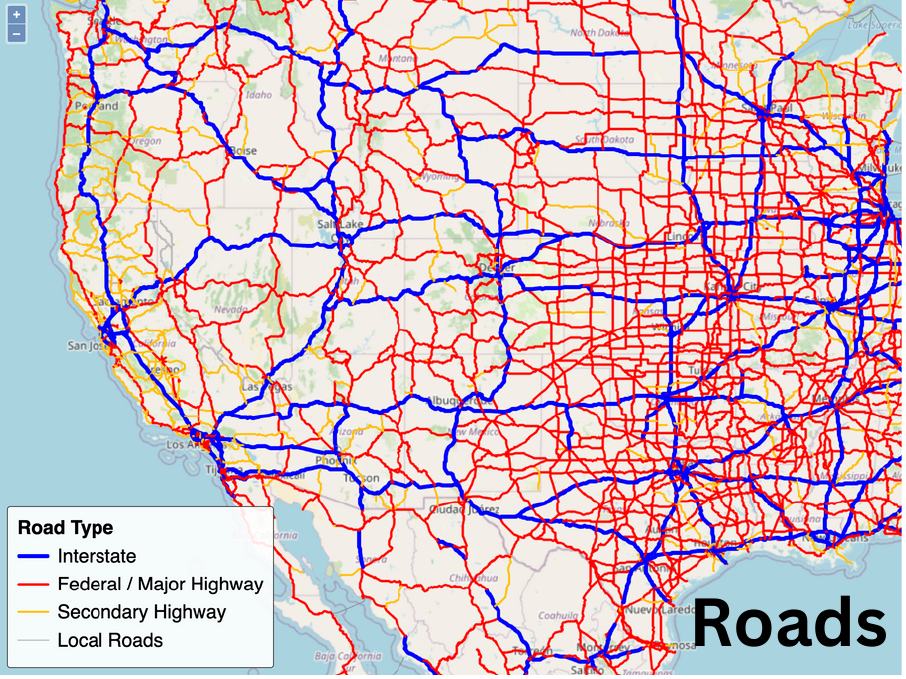}
        \caption{Roads color-coded by road type}
        \label{fig:roads}
    \end{subfigure}\hfill
    \begin{subfigure}[t]{0.33\textwidth}
        \centering
        \includegraphics[height=3.6cm,keepaspectratio]{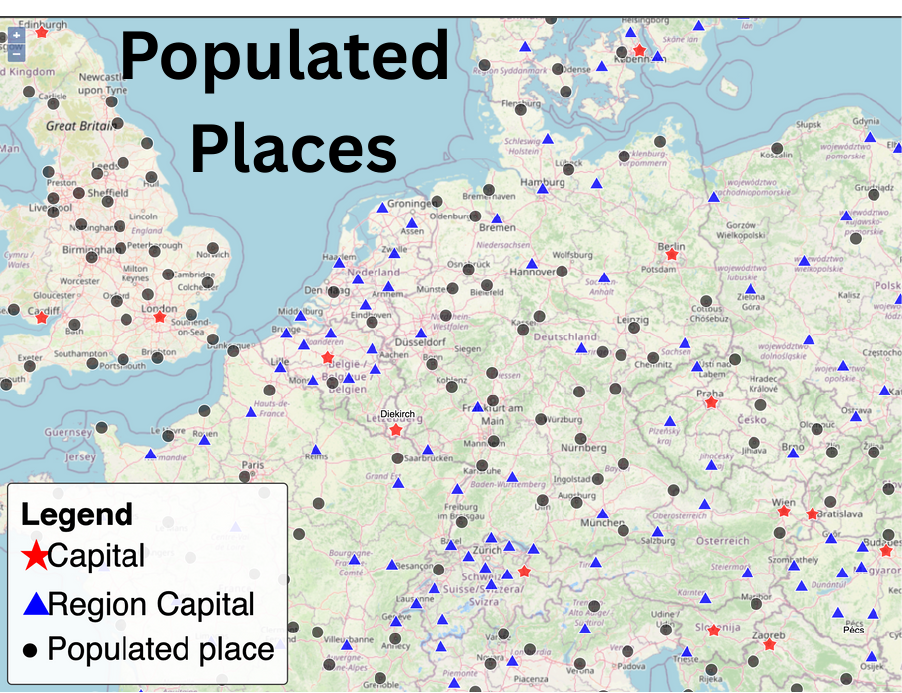}
        \caption{Populated places styled by their type}
        \label{fig:pop-places}
    \end{subfigure}

    \caption{Different client-side styles applied to the same set of vector tiles.}
    \label{fig:main}
\end{figure*}

\autoref{fig:main} illustrates client-side rendering for three datasets. Figures~\ref{fig:main}(a-d) display the US Counties dataset published by the Census Bureau~\cite{UCRSTAR/TIGER2018/COUNTY}. \autoref{fig:counties-awater} shows how a hydrologist might color-code counties by water area coverage. In Figures \ref{fig:counties-hi5} and \ref{fig:counties-tip} an administrator might color-code the same dataset by state. However, it can be seen in \autoref{fig:counties-tip} how some existing tools, e.g., Tippecanoe~\cite{tippecanoe}, produces a low-fidelity visualization where state boundaries are distorted. The reason is that Tippecanoe tries to reduce the data size by merging nearby polygons which causes inaccurate visualization. On the other hand, \autoref{fig:counties-aid} shows AID*~\cite{aidstar} visualization with server-side rendering which does not allow any of these styles. Figures~\ref{fig:roads} and~\ref{fig:pop-places} show two other examples with roads and populated cities which illustrate how users might style data in various ways requiring efficient client-side rendering.


This paper presents \hifive, a \underline{hi}gh-\underline{fi}delity \underline{v}isualization system for map \underline{e}xploration that resolves two fundamental bottlenecks in interactive vector-tile visualization, \emph{scalability} and \emph{visualization fidelity}. As data gets large, the resulting vector tiles can be massive, and, unlike raster tiles, vector tiles have no inherent upper bound, a challenge that is particularly imminent when dealing with tens of thousands of datasets. Aggressively and naively shrinking tiles is only valuable if the resulting maps remain visually and semantically faithful. \hifive reframes tile generation as a data-management optimization problem. The key idea is that not all records and not all attributes are equally important to produce a high-fidelity visualization. For example, a spatially large polygon is more prominent than a small one in the visualization. Additionally, some attributes, e.g., categorical attributes, tend to take less space and are more useful in styling compared to, say, a generated UUID, which consumes space and provides little value for visualization. \hifive exploits these asymmetries by deciding which records and attributes to retain at each tile such that tiles remain within a certain size while supporting rich client-side styling. Since the decision is made independently per tile, tiles become more detailed as the user zooms in and the tile size decreases.


This paper introduces and formalizes the \emph{visualization-aware tile reduction} problem. It builds on concepts from information theory to measure the information loss in reduced tiles and combines them with spatial data metrics to capture visualization loss. To build on that, we propose a two-stage solution, \emph{sparsification} and \emph{triage}. The sparsification step reduces the tile size by setting some values to null while minimizing information and visual loss. The triage step applies a set of heuristics to make quick and significant reductions to the tile, making it ready for the find-grained sparsification step. Together, these techniques produce compact, standard-compliant vector tiles that support interactive, user-defined styling entirely on the client, enabling exploration over terabyte-scale datasets without incurring per-style recomputation on the server. We validate our approach through extensive experiments, demonstrating substantial reductions in tile size while preserving visual fidelity and maintaining interactive performance at scale.

Below, we summarize the contributions of this paper:
\begin{itemize}[leftmargin=12pt,topsep=0pt]
    \item We propose the tile distortion metric that quantifies visual information loss in map tiles.
    \item We formalize the visualization-aware tile reduction problem and prove its NP-hardness.
    \item We introduce \hifive with a two-stage solution to the problem based on mixed integer linear programming.
    \item We carry out an extensive experimental evaluation to study the scalability of \hifive.
\end{itemize}

The remainder of the paper is organized as follows.
\autoref{sec:related_work} describes the related work on spatial data visualization.
\autoref{sec:overview} summarizes our proposed framework and outlines its key components.
\autoref{sec:modeling} defines the problem of visualization-aware tile reduction and proves its NP-hardness.
\autoref{sec:milp-size-model} presents the tile sparsification algorithm using mixed integer linear programming.
\autoref{sec:triage} describes the triage step.
\autoref{sec:experiments} reports experimental results demonstrating the effectiveness of our approach.
Finally, \autoref{sec:conclusion} concludes the paper and describes future work.

\section{Related Work} \label{sec:related_work}

Geospatial visualization systems face dual challenges of managing exponentially growing datasets while maintaining interactive performance. In recent years, two dominant approaches have emerged to tackle these challenges:

First, \textbf{cluster-based rendering} systems like GeoSparkViz~\cite{geosparkviz}, HadoopViz~\cite{hadoopviz}, and AID*~\cite{aidstar} employ \textbf{server-side rendering} through parallel computing frameworks such as Spark or Hadoop MapReduce. While effective for generating high-resolution static maps, their architecture limits client-side interactivity as the tiles are partially or fully pregenerated at the server side. In addition, separate approaches have to be designed for each type of visualization such as kernel density visualization~\cite{CIZ+22} and heat maps~\cite{LKS13}. These approaches can scale to big data but the server gets quickly overloaded as users request different styling options.



The second approach is \textbf{client-side rendering} which focuses on delivering the raw data to the client which can style and render it based on user needs without incurring any additional overhead on the server. Some tools first aggregate the data at the server side and deliver a summarized version that can be visualized efficiently but it still requires the server to be aware of the visualization type and style~\cite{JLZ+16}. Another approach is to sample the data for visualization~\cite{vas} but this approach is limited to point data and does not consider custom styles such as color-coded points. Other approaches build vector tiles, e.g., MVT~\cite{mvt} or MLT~\cite{tremmel2025mlt}, which contain all records with all their attributes. Unlike raster tiles, vector tiles are not size-bounded and can grow arbitrarily large, posing an additional overhead on transferring and rendering these tiles. Planetiler~\cite{planetiler} proposes a parallel algorithm for generating map tiles but it uses application-specific rules that are specific to OpenStreetMap data. Similarly, \cite{SGM+13} introduces the \emph{thinning} problem to cap the number of records per tile in Google Maps while maintaining geographic coherence across tiles but it does not consider the importance of individual attributes. Tippecanoe~\cite{tippecanoe} provides a general purpose MVT tile generator that employs a greedy, density-based feature dropping technique to ensure map legibility across zoom levels. However, its heuristic-driven simplification disregards the semantic utility of non-spatial attributes, leading to visual fidelity loss when styling is applied to data.

This paper follows the client-side rendering approach but makes a key shift in treating tile reduction as a first-class optimization problem, not an ad-hoc side problem. Given a user-specified size budget, \hifive selects which records, attributes, and values to materialize in each tile to minimize information loss while preserving visualization semantics. This yields compact tiles that can be styled on the client side with a tunable level-of-detail parameter.

\begin{figure}[t]
    \centering
    \includegraphics[width=\linewidth]{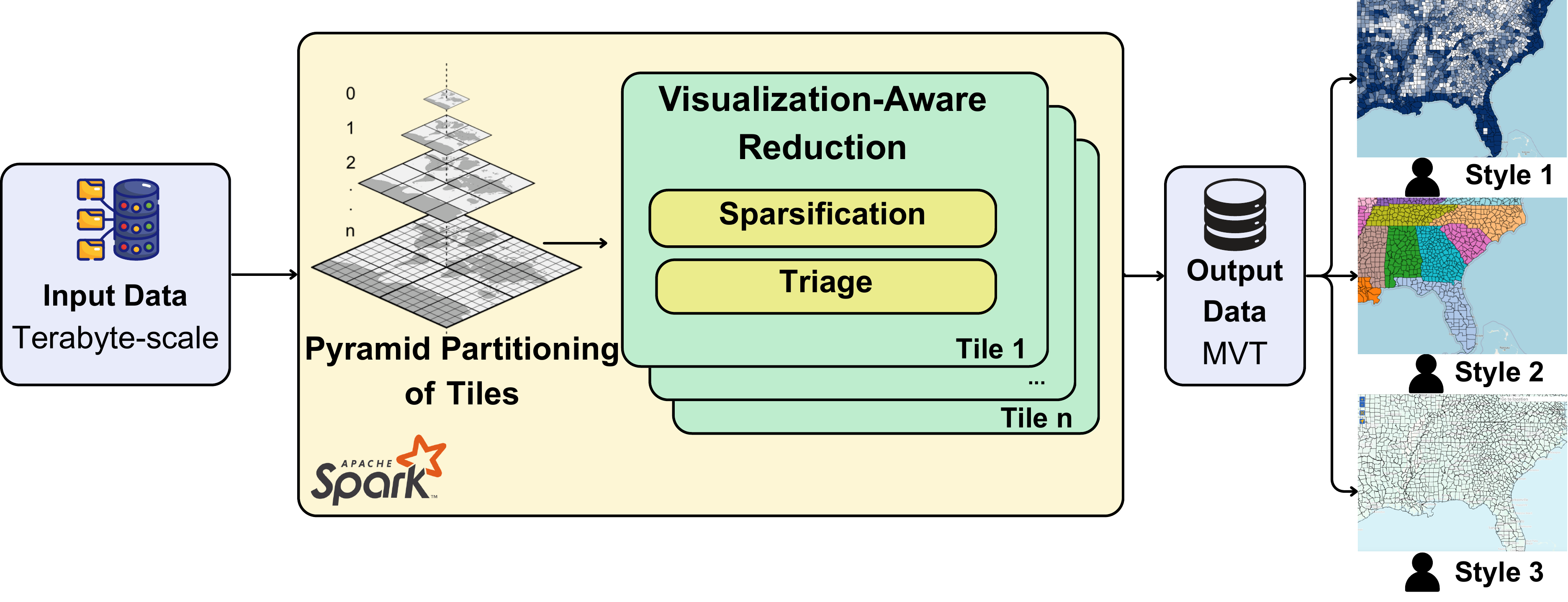} 
    \caption{This is an overview of the \hifive pipeline.}
    \label{fig:overview} 
\end{figure}

\section{Overview}
\label{sec:overview}

This work enables scalable, high-fidelity geospatial visualization via a two-stage data processing pipeline that produces compact, stylable vector tiles as shown in \autoref{fig:overview}. Our goal is to support terabyte-scale datasets while enforcing per-tile size budget that allows swift network transfer and smooth client-side rendering in modern web mapping applications.

The pipeline begins with a \emph{data preparation stage} that (i) reprojects input geometries to the web Mercator projection used in web maps and (ii) partitions the data into a multi-zoom tile pyramid structure compatible with standard web visualization libraries~\cite{OpenLayers,tremmel2025mlt,googlemaps}. This stage follows well-established practice and we refer the readers to~\cite{aidstar,hadoopviz} for full details.

Unlike raster tiles~\cite{aidstar,hadoopviz,geosparkviz}, vector tiles store individual geometries and their attributes. This results in tile sizes that are highly variable and can exceed browser and network budgets in dense regions. Therefore, this paper focuses on \emph{visualization-aware tile reduction}, that is, given a tile and a user-specified size budget $B$, it aims at reducing the tile size so it can be serialized to no more than $B$ bytes while minimizing loss in visualization fidelity. Further details on the problem definition are given in \autoref{sec:modeling}.

To solve the visualization-aware tile reduction problem, we run two algorithms individually on each tile, \emph{triage} and {\em sparsification}, where the triage algorithm makes big cuts to bring the tile size down while the sparsification algorithm is the primary one that brings it down exactly to the desired size.

This sparsification algorithm models the problem as a linear program (LP) and uses an LP solver to simultaneously reduce tile size while maintaining visualization fidelity. The key idea is to set certain attribute values to null with the goal of reducing the data size and minimizing the effect on visualization. This is done through a set of constraints and an objective function that balance (i) visual prominence measured in pixel footprint, (ii) storage size, and (iii) information loss measured via KL-divergence. Further details on this algorithm are given in \autoref{sec:milp-size-model}.

To ensure scalability on extremely dense tiles, we also introduce a lightweight \emph{visualization-aware triage} algorithm which uses various empirical techniques to reduce the tile size before running the more costly sparsification step. The triage step involves removing selected rows and columns and quantizing numeric columns. The order of operations is determined by \emph{attribute saliency}, measured via the drop in entropy caused by reduction. This ensures that triage targets attributes least likely to be used for styling the output visualizations later on. Further details on this step are given in \autoref{sec:triage}.

Both algorithm run in parallel using a distributed Spark job which allows horizontal scale up with data size. They output standard MVT tile~\cite{mvt} that can be rendered using any existing client-side visualization library such as OpenLayers~\cite{OpenLayers} or MapLibre~\cite{tremmel2025mlt}. Since our reduction algorithms do not alter the tile schema or format, existing styling tools can be used without modification.

\section{Visualization-Aware Tile Reduction Problem Definition}
\label{sec:modeling}


\hifive adopts the Mapbox Vector Tile (MVT) file format for storing all tiles. This allows clients, e.g., web browsers, to style the data based on user input without any overhead on the server~\cite{10.14778/3750601.3750690}. To enable client-side rendering, each tile contains all records in the geographic region that the tile covers. This data is stored in a table with $N$ rows and $d$ columns where each row represents a map feature, i.e., record, and each column represents an attribute, e.g., spatial geometry, country name, and population. Unfortunately, this structure cannot scale to very large datasets, as tiles will become excessively large for network transfer and client-side rendering.


The problem we address is how to reduce the size of these tiles while preserving the semantic and visual fidelity of the final rendered map. More specifically, given an input tile, how to produce an output tile that represents the data in that region but fits within the desired tile size.

This section starts with some preliminary definitions and then follows with the formal problem definition. We use the simple example tile in \autoref{tab:toy8} to help illustrate the definitions.


\begin{table}[t]
\footnotesize
\centering
\renewcommand{\arraystretch}{1.1}
\caption{Toy example with $N=4$ lakes and $d=3$ attributes. The salinity attribute indicates if a lake is (f)resh or (s)alt.}
\label{tab:toy8}

\begin{subtable}[t]{0.48\linewidth}
\centering
\caption{Input tile $T_{\text{in}}$}
\label{tab:toy8-input}
\begin{tabular}{c l c}
\toprule
\textbf{Geometry} & \textbf{name} & \textbf{salinity} \\
\midrule
G1 & Azul   & f \\
G2 & Birch  & s \\
G3 & Cobalt & s \\
G4 & Dune   & f \\
\bottomrule
\end{tabular}
\end{subtable}
\hfill
\begin{subtable}[t]{0.48\linewidth}
\centering
\caption{Output tile $T_{\text{out}}$}
\label{tab:toy8-output}
\begin{tabular}{c l c}
\toprule
\footnotesize
\textbf{Geometry} & \textbf{name} & \textbf{salinity} \\
\midrule
G1 & Azul & f \\
G2 & Birch & s \\
G3 & $\bot$ & s \\
G4 & $\bot$ & $\bot$ \\
\bottomrule
\end{tabular}
\end{subtable}

\end{table}



\begin{definition}[Schema ($\boldsymbol{S}$)]
\label{def:schema}
A \emph{schema} ($\boldsymbol{S}$) is an ordered set of attribute types
$\boldsymbol{S} = (\tau_1,\dots,\tau_d)$
where each attribute (column) $j\in\{1,\dots,d\}$ is associated with its own \emph{domain} (carrier set) $\text{Dom}_j$, which is a \emph{finite} set of admissible values for that specific column:
\[
\text{Dom}_j \subseteq \text{Dom}(\tau_j), \qquad |\text{Dom}_j| < \infty.
\]
Thus, even if $\tau_j=\tau_k$ (e.g., two \textsf{Int} attributes), the corresponding domains $\text{Dom}_j$ and $\text{Dom}_k$ may differ.
The domain always contains the null value $\bot$. Without loss of generality, we assume that the first attribute is always the geometry attribute, i.e., $\tau_1=$\texttt{Geometry}.
\end{definition}

In \autoref{tab:toy8}, the schema is (\texttt{Geometry}, \texttt{String}, \texttt{String}). The domain of the \texttt{name} attribute $\mathrm{Dom}_2=\{$Azul, Birch, Cobalt, Dune, $\bot\}$ which is limited to the values in this tile. Even though the name and salinity attributes are both of type \texttt{String}, their domains differ based on their values.

\begin{definition}[Feature]
\label{def:feature}
Given a schema $\boldsymbol{S}$, a \emph{feature} ($f$) is a typed $d$-tuple
$f \in \prod_{j=1}^{d}\text{Dom}_j$,
where the $j$-th attribute value is referenced positionally as $f[j] \in \text{Dom}_j$.
\end{definition}

\begin{definition}[Tile]
\label{def:tile}
Given a schema $\boldsymbol{S}$, a \emph{tile} ($T$) is a finite set of features
\[
T = \{f_i\}_{i=1}^{N}, \qquad f_i \in \prod_{j=1}^{d}\text{Dom}j
\]
We write $T \in \text{Rel}(\boldsymbol{\tau})$ to denote that $T$ is a relation over schema $\boldsymbol{S}$. The cardinality of $T$ is the number of features in the tile, $N=|T|$.
\end{definition}

In \autoref{tab:toy8-input}, the input tile has four features, i.e.,  $|T_{\text{in}}|=4$.
The output tile $T_{\text{out}}$ has the same schema and cardinality,
i.e., $|T_{\text{out}}|=|T_{\text{in}}|=4$, but differs in its attribute values.

\begin{definition}[Column Size]
\label{def:colsize}
Given a tile $T$, the $j$-th \emph{column} is the sequence of its $j$-th attribute values:
\[
C_j(T) = \langle t_1[j],\dots,t_N[j]\rangle \in \text{Dom}(\tau_j)^N
\]
Let $\text{Enc}_j$ be a type-dependent column encoding function that jointly encodes all values in column $j$:
\[
\text{Enc}_j:\text{Dom}(\tau_j)^N \rightarrow \{0,1\}^*
\]
The \emph{size} of column $j$ is defined as the number of bytes in its encoded representation:
\[
\text{Size}_j(T) \;=\; \bigl|\text{Enc}_j(C_j(T))\bigr|_{\mathrm{bytes}}
\]
This models columnar formats, e.g., MLT~\cite{tremmel2025mlt}, and dictionary-style formats, e.g., MVT~\cite{mvt}, where values in a column are encoded together rather than independently per row.
\end{definition}

In the example, nullifying attribute values reduces the number of distinct non-null entries and increases the frequency of null ($\bot$), leading to a smaller encoded representation for the affected columns.

\begin{definition}[Tile Size]
\label{def:tilesize}
The \emph{size} of a tile $T$ is the sum of the sizes of all its columns:
\[
\text{Size}(T) \;=\; \sum_{j=1}^{d}\text{Size}_j(T)
\]
\end{definition}

\noindent\textbf{Pixel-Weighted Semantic Divergence.}
The following part defines how two tiles $T_1$ and $T_2$ diverge from each other based on their values and perceived rendering. Our goal is to minimize the distortion between the two tiles which captures:
(1) \textbf{Visual Salience:} Attribute values belonging to features that occupy more screen space, i.e., have larger pixel footprints, are more visually prominent; and
(2) \textbf{Semantic Impact:} We should minimize the information loss, represented by value distributions, in the attributes that are likely to be used for styling. We observe that attributes with lower entropy, e.g., categorical attributes, are more likely to be used in styling since they have an observable pattern.

\begin{figure}[t]
  \centering
  \def\svgwidth{0.95\linewidth}
  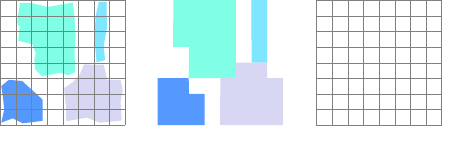
  \caption{Rasterization example of the tile $T_\text{in}$. (a) The input geometries. (b) The rasterization of the geometries. (c) The rasterization of salinity attribute ($j=3$). The empty cells denote $\bot$.}
  \Description{A figure with three subfigures indicating the input geometries on the left, their rasterization in the middle, and the rasterization of the salinity attribute on the right. }
  \label{fig:rasterize}
\end{figure}

\begin{definition}[Rendering Function ($\mathcal{R}$)]
\label{def:render}
Given a tile resolution of $R$ pixels and a geometry $g$, a \emph{rendering (or rasterization) function} returns the set of pixels that the geometry $g$ covers.
\begin{equation}
\mathcal{R}: g \rightarrow \{1,\dots,R\}
\label{eqn:render}
\end{equation}
\end{definition}
\autoref{fig:rasterize} shows an example of rasterizing the four geometries with $R=64$ pixels. We indicate the result of the rasterize function with the row number, i.e., 1 through 4.

\begin{definition}[Attribute Image]
\label{def:attimage}
Given a tile $T$, a rendering function $\mathcal{R}$, and a non-geometry attribute $j\in\{2,\dots,d\}$ we define an \emph{attribute image} $I_j^T$:

\[
I_j^{T} : \{1,\dots,R\} \rightarrow \text{Dom}_j
\]
by assigning each pixel the attribute value of the feature whose geometry covers it:
\begin{equation}
I_j^{T}(\mathbf{x}) \;=\;
\begin{cases}
f_i[j] & \text{if } \mathbf{x} \in \mathcal{R}(f_i[1]) \text{ for some feature } i,\\
\bot   & \text{if no geometry covers } \mathbf{x}.
\end{cases}
\end{equation}
Thus, $I_j^{T}$ can be viewed as a synthetic image obtained by \emph{painting} each geometry with its attribute value.
\end{definition}

\autoref{fig:rasterize}(c) shows the image of the salinity attribute for $T_\text{in}$ where each pixel is marked with either `\texttt{f}', `\texttt{s}', or left empty to indicate null.

\begin{table}[t]
\centering
\footnotesize
\renewcommand{\arraystretch}{1.15}
\caption{Pixel-weighted counts $c_j^{T}(v)$ and corresponding Laplace-smoothed
distributions $\tilde p_j^{T}(v)$ for the toy example ($R=64$, $\varepsilon=1$).
\label{tab:toy8-dists}
Uncovered pixels are assigned to $\bot$.
Bold values indicate changes between $T_{\text{in}}$ and $T_{\text{out}}$.}
\begin{tabular}{l l r r c c}
\toprule
\textbf{Attribute $j$} & \textbf{Value $v$}
& $c_j^{T_{\text{in}}}(v)$ & $c_j^{T_{\text{out}}}(v)$
& $\tilde p_j^{T_{\text{in}}}(v)$ & $\tilde p_j^{T_{\text{out}}}(v)$ \\
\midrule
\multirow{5}{*}{name}
 & Azul   & 18 & 18 & 0.275362 & 0.275362 \\
 & Birch  & 14 & 14 & 0.217391 & 0.217391 \\
 & Cobalt & 8  & \textbf{0}  & 0.130435 & \textbf{0.014493} \\
 & Dune   & 4  & \textbf{0}  & 0.072464 & \textbf{0.014493} \\
 & $\bot$ & 20 & \textbf{32} & 0.304348 & \textbf{0.478261} \\
\midrule
\multirow{3}{*}{salinity}
 & fresh  & 32 & 32 & 0.492537 & 0.492537 \\
 & salt   & 12 & \textbf{8}  & 0.194030 & \textbf{0.134328} \\
 & $\bot$ & 20 & \textbf{24} & 0.313433 & \textbf{0.373134} \\
\bottomrule
\end{tabular}

\vspace{0.4em}
For \texttt{name}, $|\mathrm{Dom}_j|=5$ and the normalization constant is
$R + \varepsilon|\mathrm{Dom}_j| = 64 + 5 = 69$.
For \texttt{salinity}, $|\mathrm{Dom}_j|=3$ and the normalization constant is
$64 + 3 = 67$.
\end{table}

\begin{definition}[Pixel-weighted Empirical Distribution]
\label{def:weighted-dist}
The pixel-weighted count of a value $v\in\text{Dom}_j$ is the number of pixels in $I_j^T$ whose value equals $x$:
\begin{equation}
c_j^{T}(v) \;=\; \left|\left\{\, \mathbf{x} \in \{1,\dots,R\} \;:\; I_j^{T}(\mathbf{x}) = v \,\right\}\right|
\label{eq:distributions}    
\end{equation}
\autoref{tab:toy8-dists} shows the pixel counts for the two tiles used in the example. Notice that even though the values `\texttt{f}' and `\texttt{s}' each appear twice in the table, their corresponding pixel-weighted counts are $c_3^{T\text{in}}(s)=12$ and $c_3^{T\text{in}}(f)=32$.

Based on these counts, we compute the empirical distribution of values while also applying \emph{Laplace smoothing} with parameter $\varepsilon>0$ added to each count $c_j^{T}$ to account for zero counts:
\begin{equation}
\tilde p_j^{T}(v) \;=\; \frac{c_j^{T}(v) + \varepsilon}{R + \varepsilon|\text{Dom}_j|},
\qquad v\in\text{Dom}_j
\label{eqn:laplace}
\end{equation}
\end{definition}

\autoref{tab:toy8-dists} shows the adjusted empirical weights. Notice how zero counts become small adjusted weights due to smoothing.

\begin{definition}[Visualization-aware Attribute Divergence]
\label{def:VAD}
Given two tiles $T_1$ and $T_2$ with an identical schema and number of rows, we compute the visualization-aware divergence of an attribute $j\in\{2,\dots,d\}$ as the \emph{Jensen-Shannon divergence} between their pixel-weighted empirical distributions. First, we define the mixture distribution:
\begin{equation}
m_j=\tfrac{1}{2}\left(\tilde p_j^{T_1}+\tilde p_j^{T_2}\right)
\label{eqn:mixture}
\end{equation}

Based on that, the visualization-aware divergence is defined as:
\begin{equation}
\mathrm{VAD}_j(T_1,T_2)=\tfrac{1}{2} \mathrm{KLD}\left(\tilde p_j^{T_1} \,\|\, m_j\right)+\tfrac{1}{2} \mathrm{KLD}\left(\tilde p_j^{T_2} \,\|\, m_j\right)
\label{eqn:VAD}
\end{equation}
where:
\begin{equation}
\mathrm{KLD}(p\|q)=\sum_{x\in\text{Dom}_j} p(x)\log_2\frac{p(x)}{q(x)} \label{eqn:KLD}  
\end{equation}

Note that using $\log_2$ ensures that $\textrm{VAD}\in[0,1]$ for all columns $j$.
\end{definition}

\autoref{tab:toy8-metrics} shows the divergence between the name and salinity columns between the two sample tiles.

\paragraph{Interpretation.}
Because the distributions are induced by the painted images $I_j^{T}$, any change in attribute values for geometries with large pixel footprints moves more probability mass and therefore increases $\mathrm{VAD}_j$ more strongly. In particular, nullifying a value for a visually large geometry transfers mass from its original value bin to the $\bot$ bin, yielding a corresponding larger divergence.

Next, we compute the divergence between two tiles by aggregating the divergence of all columns. However, we observe that users tend to apply styling using attributes with low entropy, e.g., categorical attributes with a few unique values. In the example, the `salinity' attribute is has a lower entropy than the `name' attribute (unique value for each record) and unlikely to be used for styling the visualization. Thus, we use inverse-entropy-weighted aggregation to compute the tile divergence as shown below.

\begin{definition}[Tile Distortion]
\label{def:TLD}
Given two tiles $T_1$ and $T_2$ with identical schema, we define the \emph{tile distortion} as an inverse-entropy-weighted aggregation of per-attribute divergences. We define the entropy of attribute $j$ in $T_1$ as:
\begin{equation}
H_j(T_1) \;=\; -\sum_{x\in \text{Dom}_j} \tilde p_j^{T_1}(x)\,\log_2 \tilde p_j^{T_1}(x).
\label{eqn:entropy}
\end{equation}
To emphasize attributes with lower entropy, we assign each attribute a weight that is inversely related to its entropy:
\[
w_j(T_1) \;=\; \frac{(H_j(T_1)+\delta)^{-\gamma}}{\sum\limits_{k=2}^{d} (H_k(T_1)+\delta)^{-\gamma}},
\]
where $\delta>0$ prevents division by zero and $\gamma\ge 0$ controls the strength of the emphasis on low-entropy attributes (with $\gamma=0$ yielding uniform weights).
The tile-level distortion is then:
\begin{equation}
\begin{aligned}
    \mathrm{TLD}(T_1,T_2) \;=\; \sum_{j=2}^{d} w_j(T_1)\cdot \mathrm{VAD}_j(T_1,T_2).
\end{aligned}
\label{eq:tld}
\end{equation}

\end{definition}

\begin{table}[t]
\centering
\footnotesize
\renewcommand{\arraystretch}{1.15}
\caption{Computed entropy $H_j(T_{\text{in}})$, per-attribute divergence $\mathrm{VAD}_j(T_{\text{in}},T_{\text{out}})$ (JSD, log$_2$),
inverse-entropy weights $w_j(T_{\text{in}})$ with $\delta=10^{-9}$, $\gamma=1$, and final $\mathrm{TLD}$.}
\label{tab:toy8-metrics}
\begin{tabular}{l c c c}
\toprule
\textbf{Attribute $j$} &
$H_j(T_{\text{in}})$ (bits) &
$\mathrm{VAD}_j(T_{\text{in}},T_{\text{out}})$ &
$w_j(T_{\text{in}})$ \\
\midrule
name     & 2.170965 & 0.067751 & 0.406485 \\
salinity & 1.486845 & 0.005812 & 0.593515 \\
\bottomrule
\end{tabular}

\vspace{0.5em}
\begin{tabular}{c}
\toprule
$\displaystyle
\mathrm{TLD}(T_{\text{in}},T_{\text{out}})
= \sum_{j\in\{\texttt{name},\,\texttt{salinity}\}}
w_j(T_{\text{in}})\cdot \mathrm{VAD}_j(T_{\text{in}},T_{\text{out}})
= 0.030989
$ \\
\bottomrule
\end{tabular}
\vspace{-15pt}
\end{table}

\autoref{tab:toy8-metrics} completes the example by showing the entropy of each column, the adjusted weights, and the final tile distortion.

\noindent\textbf{Visualization-aware Tile Reduction Problem Definition.}
Given an input tile $T_{\text{in}}$ with schema $\boldsymbol{S}$ and a byte budget $B$, our goal is to construct an output tile $T_{\text{out}}$ (with the same schema) that fits within the budget while minimizing visual/semantic distortion relative to $T_{\text{in}}$. Formally, we solve:
\begin{equation}
\label{eq:tile-reduction}
\begin{aligned}
\min_{T_{\text{out}} \in \text{Rel}(\boldsymbol{S})} \quad & \mathrm{TLD}(T_{\text{in}}, T_{\text{out}}) \\
\text{s.t.} \quad & \mathrm{Size}(T_{\text{out}}) \le B, \\
& |T_{\text{out}}| = |T_{\text{in}}| \;\;
\end{aligned}
\end{equation}

\begin{theorem}
\label{thm:nphard}
The visualization-aware tile reduction problem defined in \autoref{eq:tile-reduction} is NP-hard.
\end{theorem}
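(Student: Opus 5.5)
We prove NP-hardness by a polynomial-time reduction from the decision version of \textsc{0/1 Knapsack}, adapted to the sparsification structure of \autoref{eq:tile-reduction}: the only freedom is which attribute values to keep and which to replace by $\bot$, since $|T_{\text{out}}|=|T_{\text{in}}|$ and the schema is fixed. Keeping a value costs bytes; dropping it costs distortion. Concretely, we take a knapsack instance with items $i=1,\dots,n$, weights $s_i$, values $v_i$, capacity $W$ and target $V$, and build a tile whose reduction is exactly that knapsack.

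\textbf{Construction.} We use a schema with geometry plus one non-geometry attribute per item, $d=n+1$. Since the problem lets us pick the encodings $\text{Enc}_j$ (\autoref{def:colsize} only asks that they be type-dependent), column $j=i+1$ is given the encoding ``all-null column costs $0$ bytes, otherwise $s_i$ bytes.'' This works, for example, with a dictionary-style encoding whose single non-null value is a string of length about $s_i$. To avoid relying on exotic encodings, we can instead use one feature per column carrying a long distinct string, so that nullifying it removes roughly $s_i$ bytes, plus a constant overhead absorbed into $B$. We set $B=W$ plus the fixed geometry and overhead bytes.

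Geometry~$i$ covers a chosen number of pixels $a_i$ and carries a non-null value only in its own column. Removing that value moves $a_i$ pixel counts into $\bot$ in column $i+1$, so the loss is $\mathrm{VAD}_{i+1}$, a fixed monotone function $\phi(a_i)$. All columns have an identical two-value structure, and we scale the geometries so that every column has the same entropy. Then the weights $w_j$ are uniform, and
\[
\mathrm{TLD}=\tfrac1n\sum_{i\ \text{dropped}}\phi(a_i).
\]
Because columns are disjoint, the objective is additive across items and the size is additive. Hence ``is there an output with $\mathrm{Size}\le B$ and $\mathrm{TLD}\le \tau$'' is equivalent to ``is there a kept set with $\sum s_i\le W$ and dropped loss $\le \sum_i \phi(a_i)-$ (kept value).''

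\textbf{Main obstacle.} The hard step is choosing $a_i$ (together with $R$ and $\varepsilon$) so that $\phi(a_i)$ realizes the knapsack values $v_i$ exactly, or at least preserves the threshold. $\phi$ is a nonlinear JSD expression, and the entropy must stay equal across columns, which interacts with the choice of $a_i$. The first attempt is to avoid exact realization. Knapsack remains NP-hard when all $v_i=s_i$ (\textsc{Subset Sum}), and then it suffices to make every $\phi(a_i)$ equal. But equal per-item losses reduce the problem to maximizing the number of kept items, which is easy, so that simplification fails.

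Instead we reverse the roles. Give every column the same pixel footprint, so each drop costs the same loss $\phi$ and the weights are uniform. Encode the \textsc{Subset Sum} targets in the byte sizes $s_i$ instead. The budget must then be met with equality: we place multiple values per column so the feasible byte totals are exactly the subset sums. A cleaner fallback is to let each item own a column containing several features with a shared value, so that $\phi$ enters only through a count that is linear in the number of pixels.

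If exact control of $\phi$ proves unwieldy, we reduce from \textsc{Partition}, where only the equality $\sum_{\text{kept}} s_i=W$ matters and the distortion threshold merely forces ``keep at least half the mass.'' This needs only monotonicity of $\phi$, which follows from convexity of JSD along the mixing direction.

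Finally, we check that the construction, including numbers of pixels polynomial in the input when knapsack is in unary-free binary form, can be built in polynomial time. The pixel counts may be exponential in binary, but $R$ is written in binary and only the rendering sets must be describable, for example as rectangles, so the instance size stays polynomial.
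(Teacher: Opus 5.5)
\paragraph{Gap: no valid reduction is produced.} Your proposal never reaches a construction in which dropping item $i$ costs an amount proportional to $v_i$. Each fallback either fails or is left unspecified.
\begin{enumerate}
\item[(i)] In your one-column-per-item design, column $i+1$ of $T_{\text{in}}$ is two-valued, with mass $q_i=(a_i+\varepsilon)/(R+2\varepsilon)$ on its value. Its entropy is therefore the binary entropy $h(q_i)$. Forcing equal entropies forces $a_i\in\{a,R-a\}$, which allows at most two distinct per-item losses $\phi(a_i)$. That cannot encode arbitrary $v_i$. The requirement is also unnecessary: $w_j(T_{\text{in}})$ depends only on $T_{\text{in}}$, so $\mathrm{TLD}$ is already additive with fixed coefficients; alternatively you could take $\gamma=0$.
\item[(ii)] With a uniform per-drop loss, the problem becomes ``keep as many items as possible subject to $\sum_{\text{kept}}s_i\le B$.'' A greedy algorithm (keep the smallest $s_i$) solves this in polynomial time. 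The size constraint is an inequality, so nothing forces the budget to be met with equality. ``Multiple values per column'' is not specified enough to change this.
\item[(iii)] The \textsc{Partition} fallback needs the distortion to be a monotone function of the \emph{aggregate} dropped mass $\sum_{\text{dropped}}a_i$. With separate columns, $\mathrm{TLD}=\frac1n\sum_{\text{dropped}}\phi(a_i)$ is a sum of nonlinear terms. Monotonicity of $\phi$ therefore does not turn a $\mathrm{TLD}$ threshold into ``keep at least half the mass.''
\end{enumerate}

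\paragraph{The missing idea (the paper's construction).} Put all items in a \emph{single} non-geometry column, so $d=2$ and $w_2=1$. Use pairwise distinct values $f_i[2]=i$, let feature $i$ cover exactly $v_i$ pixels with no overlaps and no uncovered pixels, and set $\varepsilon=0$. Then the $\bot$ bin is empty in $T_{\text{in}}$. Write $D=\sum_{\text{dropped}}v_i/\sum_k v_k$.
\begin{itemize}
\item Each dropped bin contributes $p_i\log_2 2=p_i$ to $\mathrm{KLD}(\tilde p^{T_{\text{in}}}\|m)$, for a total of $D$.
\item The $\bot$ bin, which goes from $0$ to $D$, contributes $D\log_2 2=D$ to $\mathrm{KLD}(\tilde p^{T_{\text{out}}}\|m)$.
\end{itemize}
Hence $\mathrm{VAD}_2=\tfrac12 D+\tfrac12 D=D$ exactly, and the nonlinearity of the Jensen--Shannon divergence disappears. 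Now use an encoder giving $\mathrm{Size}=\sum_{\text{kept}}s_i$, and exceeding $B$ for any output other than keep-or-null. Minimizing $\mathrm{TLD}$ is then precisely \textsc{Knapsack}.

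A single column carrying one shared value would instead give a strictly monotone function of $D$, by convexity of JSD along the mixing direction, and that would rescue your \textsc{Partition} route.

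Your closing remark is a legitimate point that the paper's sketch glosses over: $R=\sum v_i$ is exponential in the bit-length, so the rendering must be given implicitly.
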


\begin{proof}[Proof sketch]
We prove this theorem by showing a polynomial-time reduction from the 0-1 Knapsack problem. Consider an instance of the 0-1 Knapsack with $n$ items of sizes $s_i$, values $v_i$, and capacity $B$. We construct an instance of the tile reduction problem with an input tile $T_{in}$ with $n$ features and two attributes. The first attribute is geometry and the second attribute has a unique value for each feature, e.g., $f_i^{\text{in}}[2]=i$. Further, we adjust the rendering function so that each feature is rasterized to exactly $v_i$ pixels with no overlap and no empty pixels. This can be easily achieved if each geometry is a line of length $v_i$ and the total number of pixels is $\sum v_i$. Then, we restrict the values in the output tile $T_{out}$ to be either the same corresponding value as $T_{in}$ or null ($\bot$), i.e., $f_i^{\text{out}}[2]\in\{f_i^{\text{in}}[2],\bot\}$. Finally, we define the tile encoder function such that the size of each attribute is equal to the corresponding size from the Knapsack problem if it is not null, i.e.,
$\text{Size}(T_{\text{out}})=\sum s_i \;:\;f_i^{\text{out}}[2]\neq\bot$.

This setting ensures that the size of non-null values stays within the Knapsack capacity $B$ which is straight-forward. At the same time, it minimizes the total value of nullified values which, maximizes the value of the non-null values, as shown below.

First, any non-null, i.e., retained, value $i$ in $T_{\text{out}}$ will contribute zero to KLD since $p_i^{T_\text{in}}(x)=p_i^{T_{\text{out}}}(x)=m(x)$ as shown in Equations~\ref{eqn:mixture} and \ref{eqn:KLD}. On the other hand, if the value $i$ is nullified, all its mass gets assigned to the null value ($\bot$), i.e., $p_i^{T_\text{out}}=0$ and $m_i=\tfrac{1}{2}p_i^{T_\text{in}}$. Therefore, its term in \autoref{eqn:KLD} will be:
\[p_i^{T_\text{in}}\log_2\left(\frac{p_i^{T_\text{in}}}{\tfrac{1}{2}p_i^{T_\text{in}}}\right)=p_i^{T_\text{in}}\log_2(2)=p_i^{T_\text{in}}\]
Since the number of pixels of each feature \(i\) equals its Knapsack value \(v_i\), and these values are unique, we have \(p_i^{T_\text{in}} = v_i\). Moreover, because the input tile has no empty pixels (i.e., \(p^{T_\text{in}}_\bot = 0\)), the null value does not contribute to the KLD. Therefore, tile reduction minimizes the total size of nulled values, which is equivalent to maximizing the total size of retained (non-null) values. Hence, it solves the Knapsack problem, completing the proof.
\end{proof}

Notice that the proof above requires the use of non-smoothed distributions, i.e. $\varepsilon=0$ in \autoref{eqn:laplace} but this does not change the nature of the problem.

\section{MILP-based Sparsification}
\label{sec:milp-size-model}

Given a vector tile containing $N$ records (rows) and $d$ attributes (columns), our goal is to reduce the serialized tile size below a user-specified threshold $B$ while preserving visual salience and attribute utility. This problem applies to every tile in the pyramid that is above the desired threshold. Due to the problem complexity, we simplify and reformulate it as a mixed-integer linear program (MILP) that jointly decides (i) which records to retain and (ii) which attribute values (cells) to retain for the kept records.

\paragraph{Decision variables.}
Without loss of generality, we assume that the first column contains the geometry attribute.
We introduce the following binary variables:
\begin{align}
y_i &\in \{0,1\} && \text{if geometry of record $i\in[1,N]$ is kept} \\
u_j &\in \{0,1\} && \text{if non-geometric attribute $j\in[2,d]
$ is kept} \\
x_{i,j} &\in \{0,1\} && \text{if attribute cell $(i\in[1,N],j\in[2,d])$ is kept}
\end{align}
Variables $x_{i,j}$ are instantiated only for non-null cells and $u_j$ are instantiated if the column is not empty.

\paragraph{Structural constraints.}
A cell can be kept only if both its geometry and its column are kept:
\begin{align}
x_{i,j} &\le y_i && \forall i\in[1,N],j\in[2,d], \\
x_{i,j} &\le u_j && \forall i\in[1,N],j\in[2,d].
\end{align}

\begin{table}[t]
\centering
\renewcommand{\arraystretch}{1.15}
\footnotesize
\begin{minipage}[t]{0.47\columnwidth}
\centering
\caption{Example MILP variables for input $T_{in}$ (table \ref{tab:toy8-input}).}
\label{tab:milp-x-symbolic}
\begin{tabular}{c c c}
\toprule
  & \textbf{name ($u_2$)} & \textbf{salinity ($u_3$)} \\
\midrule
$y_1$ & $x_{1,2}$ & $x_{1,3}$ \\
$y_2$ & $x_{2,2}$ & $x_{2,3}$ \\
$y_3$ & $x_{3,2}$ & $x_{3,3}$ \\
$y_4$ & $x_{4,2}$ & $x_{4,3}$ \\
\bottomrule
\end{tabular}
\end{minipage}
\hfill
\begin{minipage}[t]{0.47\columnwidth}
\centering
\footnotesize
\caption{Example MILP solution; results in $T_{\text{out}}$ (table \ref{tab:toy8-output}).}
\label{tab:milp-x-inst}
\begin{tabular}{c c c}
\toprule
  & \textbf{1} & \textbf{1} \\
\midrule
1 & 1 & 1 \\
1 & 1 & 1 \\
1 & 0 & 1 \\
1 & 0 & 0 \\
\bottomrule
\end{tabular}
\end{minipage}
\end{table}

\paragraph{Linear size model.}
To approximate the tile size for our optimization constraint, we decompose the total storage into three primary components: geometry data, fixed column overhead, and an amortized cost per cell. Let $\widehat{b}^{geom}_i$ denote the estimated bytes contributed by the geometry of record $i$, derived from its vertex count. In vector-tile encodings, attribute values are not stored independently; instead, each layer maintains a shared dictionary of distinct values. Consequently, the storage cost of an attribute column $j$ consists of a dictionary cost $\widehat{b}^{dict}_j$, representing the bytes required for unique non-null values, and a per-cell pointer cost $b^{ptr}$ used to reference the dictionary.

To maintain a linear model, we distribute the dictionary cost across the $n_j$ non-null cells in the column. Thus, when a cell $(i,j)$ is retained, it contributes an estimated cost of $b^{ptr} + \widehat{b}^{dict}_j / n_j$. While removing a single cell only reduces the dictionary size if a unique value is entirely eliminated, this approximation remains highly effective. Columns with high cardinality naturally result in higher amortized costs, making them primary candidates for reduction under a fixed size budget. Conversely, columns with few distinct values contribute minimally to the total size, ensuring that the approximation has negligible impact on the overall constraint. Future work could involve refining this linear approximation to better capture the discrete jumps in dictionary size inherent in the NP-hard visualization-aware tile reduction problem.
The resulting linear size constraint is:
\begin{equation}
\sum_{i=1}^{N} \widehat{b}^{geom}_i\,y_i
+
\sum_{(i,j)}
\left(
b^{ptr} + \frac{\widehat{b}^{dict}_j}{n_j}
\right)
x_{i,j}
\;\le\;
B,
\label{eq:milp-size-constraint}
\end{equation}
where $B$ is the target tile size. 

\paragraph{Objective function.}
The MILP maximizes a weighted utility that balances visual salience of records and fidelity of retained attributes:
\begin{equation}
\max
\;\;
\alpha\sum_{i=1}^{N} U_i^{rec}\,y_i
+
(1-\alpha)\sum_{(i,j)} U_{i,j}^{cell}\,x_{i,j}.
\label{eq:milp-objective}
\end{equation}
where $U_i^{\text{rec}}$ represents the utility of record $i$ based on its visual salience, $U_{i,j}^{\text{cell}}$ represents the cell utility which measures the information retention, and $\alpha\in[0,1]$ is a parameter that balances the importance of these two parts. Both utility functions are further described below.

\paragraph{Record utility.}
Visual salience is quantified using the accurate pixel footprint $pc_i$ of record $i$ in the rendered tile. We normalize it by the maximum pixel footprint in the tile,
\begin{equation}
\widetilde{pc}_i = \frac{pc_i}{\max_j pc_j},
\end{equation}
and define the record utility as
\begin{equation}
U_i^{rec}
=
\lambda_{rec}\,\widetilde{pc}_i^{\,p},
\label{eq:record-utility}
\end{equation}
where $\lambda_{rec} > 0$ controls the strength of the salience preference and $p \ge 1$ controls its nonlinearity. The exponent $p$ amplifies differences in normalized pixel footprint, increasing the contrast between visually large and small records. This encourages the LP solver to retain large geometries, which would otherwise be disproportionately attractive to drop due to their higher byte cost when satisfying the size constraint.

\paragraph{Cell utility.}
For each attribute cell $(i,j)$, we measure the importance of retaining its value using a normalized divergence score $\widetilde{D}_{i,j} \in [0,1]$, e.g., normalized KL-divergence with respect to the reference distribution for attribute $j$ in $T_{in}$.
\begin{equation}
D_{i,j}
\;=\;
\mathrm{KLD}\!\left(\tilde p_j^{T}\,\middle\|\,\tilde p_j^{T^{(i,j\leftarrow\bot)}}\right).
\label{eq:kl-column}
\end{equation}

\begin{equation}
\widetilde{D}_{i,j}
\;=\;
\frac{D_{i,j}}{D^{\max}_j},
\qquad
D^{\max}_j
\;=\;
\max_{r\in\{1,\dots,N\}}
D_{r,j}.
\label{eq:kl-column-norm}
\end{equation}
Here, $\tilde p_j^{T^{(i,j\leftarrow\bot)}}$ denotes the empirical distribution of column $j$ after replacing the value in cell $(i,j)$ with $\bot$ (null) in table $T$, while keeping all other cells unchanged.

We define a \emph{keep-bonus}
\begin{equation}
\mathrm{KB}_{i,j} = 1 - \widetilde{D}_{i,j},
\label{eq:keep-bonus}
\end{equation}
and assign the cell utility
\begin{equation}
U_{i,j}^{cell}=\mathrm{KB}_{i,j}.
\label{eq:cell-utility}
\end{equation}

This formulation cleanly separates responsibilities where record utilities prioritize visually salient geometries, while cell utilities preserve the most informative attribute values within the remaining size budget. The MILP therefore removes small or visually insignificant records first and spends the remaining budget on attribute values that minimize distributional distortion.

Once the problem is formulated, we pass it to an LP solver to assign values to the decision variables. To create the output tile $T^{\text{out}}$, we make a copy of the input tile $T^{\text{in}}$ and set each cell at $f_i[j]$ to null if the corresponding variable $x_{ij}=0$. Additionally if $y_i=0$, we remove that entire row and if $u_j=0$ we remove the entire column.

\begin{algorithm}[t]
\small
\caption{\textsc{CellSparsifyMILP}($T_{\text{in}}, B$)}
\label{alg:cell-sparsify}
\begin{algorithmic}[1]
\Require Input tile $T_{\text{in}}=\{f_i\}_{i=1}^{N}$ with $d$ attributes (column 1 is geometry)
\Require Size threshold $B$
\Ensure Reduced tile $T_{\text{out}}$

\Statex \textbf{Compute record salience} \label{alg:cell-sparsify:salience}
\For{$i \gets 1$ to $N$}
    \State Compute pixel footprint $pc_i$ of record $i$ \label{alg:cell-sparsify:pc}
\EndFor
\For{$i \gets 1$ to $N$}
    \State Compute record utility $U_i^{rec}$ from $pc_i$ \label{alg:cell-sparsify:rec-util}
\EndFor

\Statex \textbf{Compute column statistics for size estimation} \label{alg:cell-sparsify:dict}
\For{$j \gets 2$ to $d$}
    \State Compute dictionary bytes $\widehat{b}^{dict}_j$ and non-null count $n_j$ \label{alg:cell-sparsify:dictbytes}
\EndFor
\For{$i \gets 1$ to $N$}
    \State Estimate geometry bytes $\widehat{b}^{geom}_i$ \label{alg:cell-sparsify:geombytes}
\EndFor

\Statex \textbf{Compute cell utilities} \label{alg:cell-sparsify:cell-util}
\For{each non-null cell $(i,j)$}
    \State Compute normalized divergence $\widetilde{D}_{i,j}$ \label{alg:cell-sparsify:kld}
    \State $U^{cell}_{i,j} \gets 1 - \widetilde{D}_{i,j}$ \label{alg:cell-sparsify:keepbonus}
\EndFor

\Statex \textbf{MILP formulation} \label{alg:cell-sparsify:milp}
\State Create MILP model $\mathcal{M}$
\State Add decision variables $y_i$, $u_j$, and $x_{i,j}$
\State Add objective function (Eq.~\ref{eq:milp-objective})
\State Add structural constraints linking $x_{i,j}$ to $y_i$ and $u_j$
\State Add linear size constraint (Eq.~\ref{eq:milp-size-constraint})

\State Solve $\mathcal{M}$ \label{alg:cell-sparsify:solve}

\State Construct $T_{\text{out}}$ from $(y,u,x)$ \label{alg:cell-sparsify:construct}
\State \Return $T_{\text{out}}$
\end{algorithmic}
\end{algorithm}

\autoref{alg:cell-sparsify} outlines the MILP formulation used to sparsify a vector tile.

Lines~\ref{alg:cell-sparsify:salience}--\ref{alg:cell-sparsify:rec-util} compute the visual salience of each record.
Line~\ref{alg:cell-sparsify:pc} measures the pixel footprint $pc_i$ of each geometry under the rendering function $\mathcal{R}$ (Eq.~\ref{eqn:render}).
These values are normalized and converted into record utilities $U_i^{rec}$ according to Eq.~\ref{eq:record-utility}, favoring geometries that occupy a larger visible area.

Lines~\ref{alg:cell-sparsify:dict}--\ref{alg:cell-sparsify:geombytes} compute statistics required by the linear size model.
For each attribute column $j$, Line~\ref{alg:cell-sparsify:dictbytes} estimates the dictionary size $\widehat{b}^{dict}_j$ and the number of non-null cells $n_j$.
The dictionary size corresponds to the bytes needed to encode the distinct non-null values of column $j$ in the shared vector-tile dictionary.
Line~\ref{alg:cell-sparsify:geombytes} estimates the geometry cost $\widehat{b}^{geom}_i$ of each record using its vertex count.
Lines~\ref{alg:cell-sparsify:cell-util}--\ref{alg:cell-sparsify:keepbonus} assign utilities to individual attribute cells.
For each non-null cell $(i,j)$, Line~\ref{alg:cell-sparsify:kld} computes the normalized divergence $\widetilde{D}_{i,j}$ using the column-wise KL-divergence defined in Eqs.~\ref{eq:kl-column}--\ref{eq:kl-column-norm}.
Line~\ref{alg:cell-sparsify:keepbonus} converts this score into a keep-bonus $\mathrm{KB}_{i,j}=1-\widetilde{D}_{i,j}$.
As a result, cells whose removal causes \emph{minimal} distributional change receive \emph{higher} utility and are more likely to be retained.
Lines~\ref{alg:cell-sparsify:milp}--\ref{alg:cell-sparsify:solve} formulate and solve the MILP.
The objective function (Eq.~\ref{eq:milp-objective}) balances record utilities and cell utilities using parameter $\alpha$, while the size constraint (Eq.~\ref{eq:milp-size-constraint}) ensures that the serialized tile does not exceed the target size $B$.
Finally, Line~\ref{alg:cell-sparsify:construct} constructs the output tile by removing records with $y_i=0$, dropping columns with $u_j=0$, and setting individual cells to null when $x_{i,j}=0$.

\section{Visualization-Aware Triage}
\label{sec:triage}

The sparsification problem and its MILP formulation perform well at reducing the tile size while preserving visualization fidelity. However, due to its complex formulation, it might not scale to extremely large tiles. This is especially needed for the tiles in the upper zoom levels that cover a very large geographical area, e.g., an entire country or continent. At an extreme case, the tile at level zero contains the entire dataset.

To address this problem, this section introduction the \emph{visualization-aware triage} problem which uses a set of heuristics to significantly reduce the tile size with little overhead to make it ready for the sparsification step.
The heuristics are grouped into two main approaches,
\emph{record-level triage} which makes horizontal cuts by keeping only high priority records, and
\emph{column-based triage} which removes some columns and reduces the size of some others.

\subsection{Record-Level Triage}
\label{sec:triage:record}
One natural strategy to limit the tile size is by reducing the number of records. A straight-forward solution is to pick a random sample with a fixed-size, e.g., reservoir sampling. The visualization-aware sampler~\cite{vas} focuses on unstyled scatter plots on points but is not applicable for non-point geometries or styling based on other attributes.

This part extends the idea of reservoir sampling by adding a priority that aims at preserving visualization fidelity by prioritizing records that are expected to contribute more to the visualization. This entails two questions that we need to address, first, how to define the reservoir capacity, and second, which priority to use when selecting records.


To address the first question, we have to decide on how to set the capacity of the tile which determines the constraint under which records are admitted into the reservoir. Notice that the reservoir capacity is different and typically much larger than the reduction threshold $B$ used by the sparsification algorithm. We considered four capacity metrics, each reflecting a different aspect of tile complexity:

\textbf{Number of Records:} This is a straight-forward and widely adopted option used in sampling algorithms such as traditional reservoir sampling or visualization-aware sampling (VAS)~\cite{vas}.

\textbf{Total Byte Size:} Directly controls the expected output size and is most closely aligned with loading time at the client.
 
 \textbf{Number of Vertices:} Captures geometric complexity and can directly impact the visualization quality and rendering time.

\textbf{Number of Cells:} Reflects the contribution of non-spatial attributes to overall tile size.

The second question to address is how to prioritize the records that are added to the tile. Sampling priorities determine which records are admitted first when competing for reservoir capacity which includes:

\textbf{Record Byte Size:} Used to choose between many small records or a fewer large records that contain more information.

\textbf{Number of Vertices:} Explores trade-offs between preserving fewer complex geometries versus many simple ones.
 \textbf{Tuple Size (Attributes):} Tests whether retaining many sparse records or fewer rich records yields higher quality.

 \textbf{Pixel Size:} Prioritizes features by projected screen footprint.

\textbf{Random:} Serves as an unbiased baseline.

Notice that the decisions of the reservoir capacity and the priority are orthogonal. For example, if the capacity is the \emph{total byte size} and the priority is \emph{largest pixel size}, we will logically sort all records by pixel size in descending order, and pick them one-by-one until the total byte size exceeds the capacity. Thus, we can use this approach to explore the different options and study their effect visualization fidelity.

Since our goal is to prepare the tile for the sparsification step, we decided to go with the \emph{number of cells} which directly impacts the processing cost of the LP solver. We let the sparsification step carry out the more comprehensive reduction as discussed earlier.
 
To better understand the relationships among candidate priority metrics, we computed the Spearman correlation matrices for three representative datasets (Provinces, Roads, and Countries). As shown in \autoref{tab:spearman}, we observe strong correlation between \emph{number of vertices}, \emph{number of pixels}, and \emph{byte size}. This makes sense because a geometry with a large number of vertices tends to have a larger area and takes more bytes. This means that ordering by any of them would result in roughly a similar order so we decided to adopt the \emph{largest number of vertices} as the default priority metric since it is the most straightforward to compute.


\begin{table}[t]
\centering
\footnotesize
\caption{Spearman correlation among priority metrics.}
\label{tab:spearman}
\setlength{\tabcolsep}{4pt} 
\begin{tabular}{lccc}
\toprule
Metric & Vertices & Pixels & Attributes \\
\midrule
\multicolumn{4}{l}{\textbf{Provinces}} \\
Pixels      & 0.743 & -- & -- \\
Attributes  & 0.195 & 0.089 & --  \\
Byte Size   & 0.979 & 0.731 & 0.226  \\
\midrule
\multicolumn{4}{l}{\textbf{Roads}} \\
Pixels      & 0.901 & -- & --  \\
Attributes  & 0.061 & 0.070 & -- \\
Byte Size   & 0.879 & 0.811 & 0.460  \\
\bottomrule
\end{tabular}
\end{table}

To summarize, the record-level triage offers a computationally inexpensive and highly parallel way to reduce large tiles before any fine-grained analysis occurs. However, because it discards entire records, it may remove visually or semantically important content, even when only specific attributes or cells within those records contribute to the oversize problem. As a result, pure record-level triage offers efficiency but may not always preserve fidelity in dense regions.

\subsection{Column-Level Triage}
\label{sec:triage:column}

The second set of heuristics work at the column level and aim to reduce the number of columns and the amount of information stored in them. This step is performed before sparsification, if the tile exceeds size constraint after record-level triage.

\subsubsection{Attribute Subset Selection}

At first, this step reduces the size by removing some columns not used in visualization. Any column that has the same value in all rows, i.e., zero entropy, can be safely removed since it does not result in any visual information. Further, users can manually, or with the assist of LLM~\cite{10.14778/3750601.3750690}, remove any attributes that are not relevant to the application in hand. For example, some datasets contain several internal identifiers that are not meaningful to the user or contains labels in various languages and some of them can be removed. This step is optional and also users may choose to proceed with the entire datasets. 

\subsubsection{Geometry Simplification}
Since the size of the tile at each level changes, this gives us the opportunity to simplify some geometries to reduce their level of detail and size while preserving their visual fidelity at the target zoom level. For this step, the system offers three approaches: Douglas-Peucker, Topology-Preserving, or grid snapping simplification.

Douglas-Peucker takes a geometry and a tolerance distance $\varepsilon$. It then iteratively connects the two endpoints and removes any intermediate points that are closer than $\varepsilon$. The geometry is then split into two parts, and the process is repeated. By setting $\varepsilon$ closer to the pixel size at the corresponding zoom level, the simplified geometry will look visually similar to the original one.

The topology-preserving simplifier works similar to Douglas-Peucker but adds additional constraints to ensure that the simplified geometry does not cause any self intersection or change topology-changing issues.

The grid-snapping simplification applies a uniform grid of a fixed size and snaps each vertex to the closest grid cell. It then removes multiple consecutive points that snap to the same grid cell.

In addition to simplification, the system implements geometry clipping to ensure that only portions of geometries that intersect with the tile boundaries are kept. Geometries outside the tile's bounds are removed, while those crossing boundaries are trimmed to the tile's extent. The system handles special cases for different geometry types, e.g., points, multipoints, linestrings, and polygons, applying appropriate simplification strategies to each.

For the final rendering, all coordinates are converted to integer coordinates according to the Mapbox Vector Tile (MVT) specification.


\subsubsection{Numeric Quantization}
For numeric columns, we apply a quantization step that groups values into a few discreet values, e.g., 10 by default. Given that these values are typically used to create a coloring scheme for records, the difference in color can be tolerated in visualization, especially in color scales. Column-based and dictionary-based stores, e.g., MVT, use less disk space after quantization. Furthermore, quantization reduces the entropy of the column which allows the sparsification step to give these attributes higher priority. To reduce computation time, only numeric columns with more than 20 unique values are considered for quantization.

\subsubsection{String Trimming}
To reduce the size of text attributes, we trim them to a fixed length. String attributes are typically used as labels and trimming them would significantly reduce the tile size while still giving a hint to the user on what the label is. For example, if the original label is ``Massachusets'', shortening it to ``Mass$\dots$'', would still be informative. To measure the information loss, we compute the KL-divergence between the original values and the trimmed ones while mapping the values correspondingly. If the trimming causes labels to merge together, e.g., ``Santa Cruz'' and ``Santa Clara'' becoming both ``Santa C$\dots$'', then it will result in a higher divergence score.

\subsubsection{Prioritized Column Reduction}
For both \emph{numeric quantization} and \emph{string trimming}, we prioritize attribute reduction by employgin the KL divergence between each attribute's original and post-reduction probability distributions. This information-theoretic measure quantifies the expected information loss resulting from the transformation. Lower KL divergence values indicate that the reduction would preserve more of the attribute's original information content.
The reduction process proceeds iteratively, targeting attributes with the smallest change in KL divergence first. After each attribute reduction, the system recalculates the tile size and continues the process until either the size falls below the target threshold or all attributes have been processed.

\begin{figure*}[t]
    \centering

    \begin{subfigure}[t]{0.16\textwidth}
        \centering
        \includegraphics[width=\linewidth]{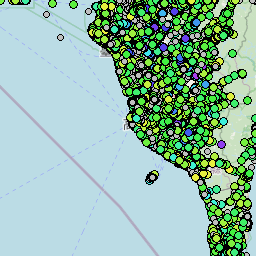}
        \caption{eBird (Grad)}
    \end{subfigure}\hfill
    \begin{subfigure}[t]{0.16\textwidth}
        \centering
        \includegraphics[width=\linewidth]{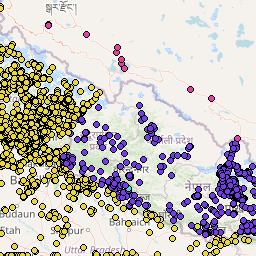}
        \caption{eBird (Cat)}
    \end{subfigure}\hfill
    \begin{subfigure}[t]{0.16\textwidth}
        \centering
        \includegraphics[width=\linewidth]{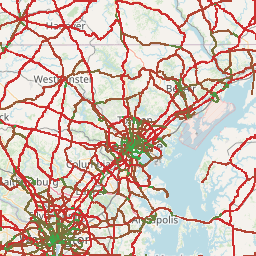}
        \caption{Roads (Grad)}
    \end{subfigure}\hfill
    \begin{subfigure}[t]{0.16\textwidth}
        \centering
        \includegraphics[width=\linewidth]{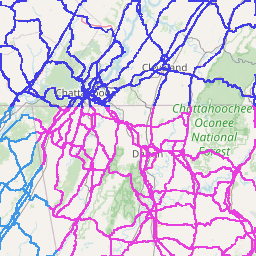}
        \caption{Roads (Cat)}
    \end{subfigure}\hfill
    \begin{subfigure}[t]{0.16\textwidth}
        \centering
        \includegraphics[width=\linewidth]{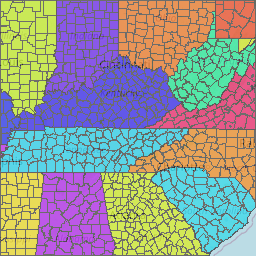}
        \caption{Counties (Cat)}
    \end{subfigure}\hfill
    \begin{subfigure}[t]{0.16\textwidth}
        \centering
        \includegraphics[width=\linewidth]{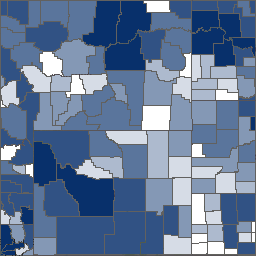}
        \caption{Counties (Grad)}
    \end{subfigure}

    \caption{
    Examples of visualization stylings used in our experiments.
    We show two stylings for \emph{eBird} (gradient based on observation time of the day and categorical based on country code),
    two for \emph{Roads} (gradient based on length and categorical based on state), and two for \emph{Counties} (gradient based on area of water and categorical based on state).
    }
    \label{fig:all_stylings_overview}
\end{figure*}

\section{Experiments} \label{sec:experiments}

In this section, we run a comprehensive evaluation on \hifive to provide experimental evidence of its scalability. We also evaluate the stylized vector tiles produced using several objective metrics to evaluate the visualization quality.

\subsection{Experimental Setup}
We ran experiments on a twelve‐node Spark cluster. The master has two 8‐core Intel Xeon E5-2609 v4 CPUs (1.70 GHz) and 128 GB RAM, while each worker node has two 6‐core Intel Xeon E5-2603 v4 CPUs (1.70 GHz) and 64 GB RAM. All machines run CentOS 7.5.1804, using local SSDs for the OS and HDDs for data storage. 

\begin{table}[t]
\centering
\footnotesize
\caption{Main datasets used in our experiments.}
\label{tab:datasets}
\begin{tabular}{lrrl}
\toprule
\textbf{Dataset} & \textbf{Size} & \textbf{\#Records} & \textbf{Geometry} \\
\midrule
eBird~\cite{ebird}                & 935.3 GB & 801M & Points \\
OSM\_Roads~\cite{EM15}            & 31 GB    & 70M  & Linestrings \\
OSM\_Buildings~\cite{EM15}        & 187 GB   & 455M & Polygons \\
Postal\_Codes~\cite{EM15}         & 7 GB     & 236k & Polygons \\
Roads\_NA~\cite{UCRSTAR/NE/roads_NA} & 81 MB & 49k  & Linestrings \\
Counties~\cite{UCRSTAR/TIGER2018/COUNTY} & 193 MB & 3k & Polygons \\
\bottomrule
\end{tabular}
\end{table}

\autoref{tab:datasets} lists the datasets used in experiments. The datasets includes points, lines, and polygons with up-to 935GB. For large-scale datasets such as eBird, OSM Roads, and OSM Buildings, we additionally generate multiple subsets using uniform random sampling at different ratios (e.g., 10\%, 1\%, 0.1\%, and smaller). These subsets enable scalability experiments and allow us to study the impact of tile-size constraints on both performance and visual fidelity across a wide range of data volumes, while preserving the spatial and attribute characteristics of the original datasets.

\textbf{Baselines:} We consider two baselines. AID*~\cite{aidstar} which is a Spark-based visualization index that combines pre-generated tiles with a data index to support scalable visualization of raster tiles. Tippecanoe~\cite{tippecanoe} is the tool used by Mapbox~\cite{mvt} to create reduced-size vector tiles.

\textbf{Parameters:} \autoref{tab:exp-settings} summarizes the experimental parameters that we vary and their corresponding default values.

\begin{table}[t]
\centering
\footnotesize
\caption{Experimental settings.}
\label{tab:exp-settings}
\begin{tabular}{ll}
\toprule
\textbf{Setting} & \textbf{Values (default)} \\
\midrule
Tile size constraint ($B$) & 32 KB -- 4 MB (256 KB) \\
Triage priority            & Highest number of vertices \\
Tile capacity              & (100K cells) \\
Zoom levels                & 1 -- 15 (8) \\
$\alpha$                   & 0.0 -- 1.0 (0.5) \\
Input sample size          & 0.001\% -- 100\% (100\%) \\ 
\bottomrule
\end{tabular}
\end{table}

\subsection{Quality Metrics}
\label{sec:quality-metrics}

To evaluate visual fidelity, we render each tile (baseline and reduced) into an RGB image of size
$H \times W$ (in our experiments $H=W=256$) using the same client-side styling.
\autoref{fig:all_stylings_overview} summarizes the styles we use in the evaluation.
Let $A,B \in \{0,\dots,255\}^{H\times W\times 3}$ denote the resulting baseline and reduced images,
and let $A_{u,v,c}$ be the intensity of channel $c\in\{1,2,3\}$ at pixel $(u,v)$.

\paragraph{RMSE}
We compute the per-pixel mean squared error (MSE) over all pixels and channels,
\begin{equation}
\mathrm{MSE}(A,B)
=
\frac{1}{3HW}\sum_{u=1}^{H}\sum_{v=1}^{W}\sum_{c=1}^{3}\left(A_{u,v,c}-B_{u,v,c}\right)^2,
\end{equation}
and report the root mean squared error (RMSE),
\begin{equation}
\mathrm{RMSE}(A,B)=\sqrt{\mathrm{MSE}(A,B)}.
\end{equation}

\paragraph{PSNR}
Peak signal-to-noise ratio (PSNR) is reported in decibels (dB) using the 8-bit peak value $I_{\max}=255$. Higher PSNR indicates closer visual agreement.
{\small
\begin{equation}
\mathrm{PSNR}(A,B)
=
\begin{cases}
+\infty & \text{if }\mathrm{MSE}(A,B)=0,\\[2pt]
20\log_{10}(I_{\max}) - 10\log_{10}(\mathrm{MSE}(A,B)) & \text{otherwise.}
\end{cases}
\end{equation}
}

\paragraph{Structural Similarity Index (SSIM)}
The SSIM index~\cite{WBS+04} is a perceptual metric designed to align with human visual judgment by comparing local patterns of luminance, contrast, and structure rather than global pixel-wise errors. The metric operates by sliding a fixed-size window across the reference image $A$ and compared image $B$, extracting local patches $x$ and $y$ at each location. The similarity between these patches is defined as:
\begin{equation}
\mathrm{SSIM}(x,y) = \frac{(2\mu_x\mu_y + C_1)(2\sigma_{xy}+C_2)}{(\mu_x^2+\mu_y^2+C_1)(\sigma_x^2+\sigma_y^2+C_2)}
\end{equation}
where $\mu$ and $\sigma$ denote the local mean and variance, $\sigma_{xy}$ is the covariance, and $C_1, C_2$ are stabilizing constants. 

The final score is the mean SSIM aggregated over all window locations and color channels. In our evaluation, we use $7 \times 7$ windows on an intensity range of $[0,255]$. SSIM values range from $[-1,1]$, with higher values indicating superior structural fidelity. Future work could investigate adapting this optimization objective to directly minimize the tile-level distortion defined in our problem formulation, potentially bypassing the need for full image rendering during the reduction process.

\subsection{End-to-end Evaluation}

\begin{figure}[t]
\centering

\pgfplotstableread{
Records    Hi5_Time AID_Time Tippe_Time Hi5_Size AID_Size Tippe_Size
8000       95.0     39.0     10.0       16.3     0.003    1.7
80000      78.0     42.0     60.0       69.9     0.03     4.5
800000     96.0     79.0     300.0      239.3    0.3      20.0
8000000    184.0    114.0    3060.0     543.9    1.4      49.0
80000000   820.0    223.0    nan        971.3    27.6     nan
801000000  3420.0   1560.0   nan        926.7    28.0     nan
}\scalingDataEbird

\pgfplotstableread{
Records    Hi5_Time AID_Time Tippe_Time Hi5_Size AID_Size Tippe_Size
7000       88.0     34       6          4.6      1.6      4.9
70000      107.0    39       17         30.4     4.1      29
700000     121.0    57       141        155.7    8.7      150
7000000    219.0    88       1623       654.1    23.2     601
70000000   913.0    158      nan        1650     50.6     nan
}\scalingDataRoads

\pgfplotstableread{
Records    Hi5_Time AID_Time Tippe_Time Hi5_Size AID_Size Tippe_Size
45000      87.0     34       6          0.2      1.4      4.9
450000     77.0     74       142        1.5      3.2      12.2
4500000    174.0    87       594        10.4     6.3      123
45000000   439.0    142      nan        56.2     12.2     nan
450000000  1969.0   759      nan        255.5    19.2     nan
}\scalingDataBuildings

\begin{tikzpicture}[scale=0.9]
\begin{groupplot}[
    group style={
        group size=2 by 3,
        horizontal sep=0.85cm,  
        vertical sep=1.0cm,     
        xlabels at=edge bottom,
        xticklabels at=edge bottom,
        ylabels at=edge left,
        yticklabels at=edge left
    },
    width=0.385\columnwidth,   
    height=0.32\columnwidth,   
    scale only axis,
    xmode=log,
    ymode=log,
    grid=major,
    tick label style={font=\normalsize},
    label style={font=\small},
    ylabel style={yshift=-2pt},
    xlabel style={yshift=0pt},
    title style={font=\normalsize, yshift=-1ex}
]

\nextgroupplot[
    ylabel={Runtime (s) (Log)},
    title={Computation},
]
    \addplot[color=red,   line width=0.8pt, mark=square*, mark size=1.2pt]
        table[x=Records, y=Hi5_Time] {\scalingDataEbird};
    \addplot[color=blue,  mark=x, mark size=1.5pt]
        table[x=Records, y=AID_Time] {\scalingDataEbird};
    \addplot[color=black, mark=o, mark size=1.5pt]
        table[x=Records, y=Tippe_Time] {\scalingDataEbird};

\nextgroupplot[
    ylabel={Output Size (MB) (Log)},
    title={Storage},
]
    \addplot[color=red,   line width=0.8pt, mark=square*, mark size=1.2pt]
        table[x=Records, y=Hi5_Size] {\scalingDataEbird};
    \addplot[color=blue,  mark=x, mark size=1.5pt]
        table[x=Records, y=AID_Size] {\scalingDataEbird};
    \addplot[color=black, mark=o, mark size=1.5pt]
        table[x=Records, y=Tippe_Size] {\scalingDataEbird};

\nextgroupplot[
    ylabel={Runtime (s) (Log)},
]
    \addplot[color=red,   line width=0.8pt, mark=square*, mark size=1.2pt]
        table[x=Records, y=Hi5_Time] {\scalingDataRoads};
    \addplot[color=blue,  mark=x, mark size=1.5pt]
        table[x=Records, y=AID_Time] {\scalingDataRoads};
    \addplot[color=black, mark=o, mark size=1.5pt]
        table[x=Records, y=Tippe_Time] {\scalingDataRoads};

\nextgroupplot[
    ylabel={Output Size (MB) (Log)},
]
    \addplot[color=red,   line width=0.8pt, mark=square*, mark size=1.2pt]
        table[x=Records, y=Hi5_Size] {\scalingDataRoads};
    \addplot[color=blue,  mark=x, mark size=1.5pt]
        table[x=Records, y=AID_Size] {\scalingDataRoads};
    \addplot[color=black, mark=o, mark size=1.5pt]
        table[x=Records, y=Tippe_Size] {\scalingDataRoads};

\nextgroupplot[
    ylabel={Runtime (s) (Log)},
    xlabel={Number of Records (Log)}
]
    \addplot[color=red,   line width=0.8pt, mark=square*, mark size=1.2pt]
        table[x=Records, y=Hi5_Time] {\scalingDataBuildings};
    \addplot[color=blue,  mark=x, mark size=1.5pt]
        table[x=Records, y=AID_Time] {\scalingDataBuildings};
    \addplot[color=black, mark=o, mark size=1.5pt]
        table[x=Records, y=Tippe_Time] {\scalingDataBuildings};

\nextgroupplot[
    ylabel={Output Size (MB) (Log)},
    xlabel={Number of Records (Log)}
]
    \addplot[color=red,   line width=0.8pt, mark=square*, mark size=1.2pt]
        table[x=Records, y=Hi5_Size] {\scalingDataBuildings};
    \addplot[color=blue,  mark=x, mark size=1.5pt]
        table[x=Records, y=AID_Size] {\scalingDataBuildings};
    \addplot[color=black, mark=o, mark size=1.5pt]
        table[x=Records, y=Tippe_Size] {\scalingDataBuildings};

\end{groupplot}

\matrix[
  matrix of nodes,
  anchor=north,
  nodes={font=\small, inner sep=1pt},
  column sep=6pt
] at ($(group c1r3.south)!0.5!(group c2r3.south) - (0,0.80cm)$) { 
  {\tikz{\draw[red, line width=0.8pt] (0,0) -- (0.3,0);
         \fill[red] (0.12,-0.03) rectangle +(0.06,0.06);}}; & \node{Hi5}; &
  {\tikz{\draw[blue] (0,0) -- (0.3,0);
         \draw[blue] (0.12,-0.03) -- (0.18,0.03)
                      (0.12,0.03) -- (0.18,-0.03);}}; & \node{AID*}; &
  {\tikz{\draw[black] (0,0) -- (0.3,0);
         \draw[black] (0.15,0) circle (0.04);}}; & \node{Tippecanoe}; \\
};

\node[anchor=east, font=\small] at ($(group c1r1.west) + (-0.85cm,0)$) {eBird};
\node[anchor=east, font=\small] at ($(group c1r2.west) + (-0.85cm,0)$) {Roads};
\node[anchor=east, font=\small] at ($(group c1r3.west) + (-0.85cm,0)$) {Buildings};

\end{tikzpicture}
\caption{Scaling comparison across eBird subsets (8k to 801M records), roads subsets (7k to 70M records), and buildings subsets (45k to 455M records). Both axes are logarithmic.}
\label{fig:scaling_comparison}
\end{figure}
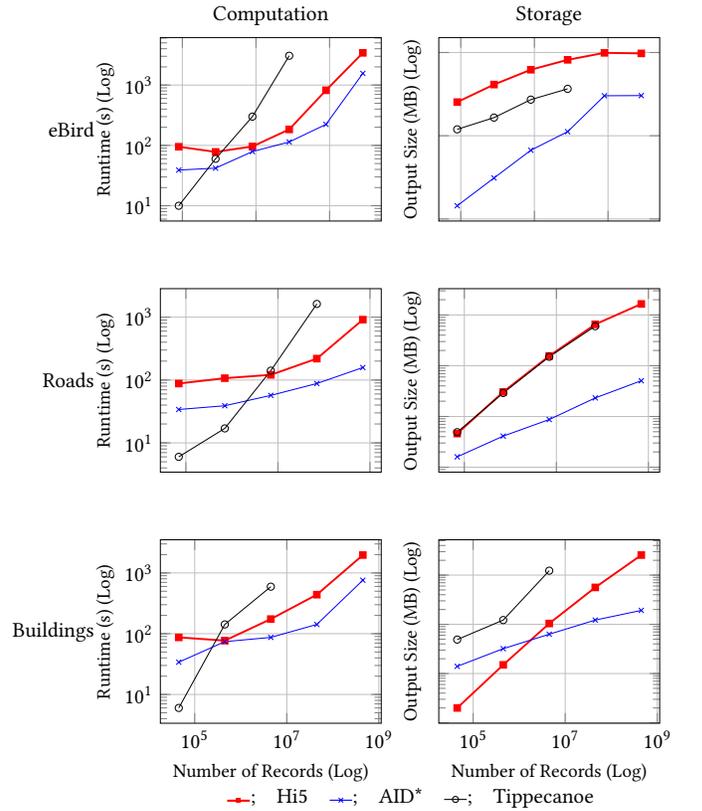
\mycomment{
\begin{figure}[t]
\centering
\pgfplotstableread{
Records    Hi5_Time AID_Time Tippe_Time Hi5_Size AID_Size Tippe_Size
7000       88.0    34       6          4.6      1.6       4.9
70000      107.0    39       17         30.4     4.1      29
700000     121.0    57       141        155.7    8.7       150
7000000    219.0    88       1623       654.1    23.2      601
70000000   913.0    158      nan        1650     50.6      nan
}\scalingDataRoads

\begin{tikzpicture}[scale=0.9]
\begin{groupplot}[
    group style={
        group size=2 by 1,
        horizontal sep=1.1cm,
        xlabels at=edge bottom,
        xticklabels at=edge bottom
    },
    width=0.41\columnwidth,         
    height=0.35\columnwidth,
    scale only axis,
    xmode=log,
    ymode=log,
    grid=major,
    tick label style={font=\normalsize},
    label style={font=\small},
    title style={font=\normalsize, yshift=-1ex}
]

\nextgroupplot[
    ylabel={Runtime (s) (Log)},
    title={Computation},
    xlabel={Number of Records (Log)}
]
    \addplot[color=red, line width=0.8pt, mark=square*, mark size=1.2pt]
        table[x=Records, y=Hi5_Time] {\scalingDataRoads};
    
    \addplot[color=blue, mark=x, mark size=1.5pt]
        table[x=Records, y=AID_Time] {\scalingDataRoads};
    
    \addplot[color=black, mark=o, mark size=1.5pt]
        table[x=Records, y=Tippe_Time] {\scalingDataRoads};

\nextgroupplot[
    ylabel={Output Size (MB) (Log)},
    title={Storage},
    xlabel={Number of Records (Log)}
]
    \addplot[color=red, line width=0.8pt, mark=square*, mark size=1.2pt]
        table[x=Records, y=Hi5_Size] {\scalingDataRoads};
    
    \addplot[color=blue, mark=x, mark size=1.5pt]
        table[x=Records, y=AID_Size] {\scalingDataRoads};
    
    \addplot[color=black, mark=o, mark size=1.5pt]
        table[x=Records, y=Tippe_Size] {\scalingDataRoads};

\end{groupplot}

\matrix[
  matrix of nodes,
  anchor=north,
  nodes={font=\small, inner sep=1pt},
  column sep=6pt
] at ($(group c1r1.south)!0.5!(group c2r1.south) - (0,0.8cm)$) {
  {\tikz{\draw[red, line width=0.8pt] (0,0) -- (0.3,0); 
              \fill[red] (0.12,-0.03) rectangle +(0.06,0.06);}}; & \node{Hi5}; &
  
  {\tikz{\draw[blue] (0,0) -- (0.3,0); 
              \draw[blue] (0.12,-0.03) -- (0.18,0.03) (0.12,0.03) -- (0.18,-0.03);}}; & \node{AID*}; &
  
  {\tikz{\draw[black] (0,0) -- (0.3,0); 
              \draw[black] (0.15,0) circle (0.04);}}; & \node{Tippecanoe}; \\
};
\end{tikzpicture}
\caption{Scaling comparison across roads subsets (7k to 70M records). Both axes are logarithmic.}
\label{fig:scaling_comparison-roads}
\end{figure}

\begin{figure}[t]
\centering
\pgfplotstableread{
Records    Hi5_Time AID_Time Tippe_Time Hi5_Size AID_Size Tippe_Size
45000       87.0    34       6          0.2      1.4       4.9
450000      77.0    74       142        1.5      3.2      12.2
4500000     174.0   87       594        10.4     6.3      123
45000000    439.0   142      nan        56.2     12.2     nan
450000000   1969.0  759      nan        255.5    19.2     nan
}\scalingDataBuildings

\begin{tikzpicture}[scale=0.9]
\begin{groupplot}[
    group style={
        group size=2 by 1,
        horizontal sep=1.1cm,
        xlabels at=edge bottom,
        xticklabels at=edge bottom
    },
    width=0.41\columnwidth,
    height=0.35\columnwidth,
    scale only axis,
    xmode=log,
    ymode=log,
    grid=major,
    tick label style={font=\normalsize},
    label style={font=\small},
    title style={font=\normalsize, yshift=-1ex}
]

\nextgroupplot[
    ylabel={Runtime (s) (Log)},
    title={Computation},
    xlabel={Number of Records (Log)}
]
    \addplot[color=red, line width=0.8pt, mark=square*, mark size=1.2pt]
        table[x=Records, y=Hi5_Time] {\scalingDataBuildings};

    \addplot[color=blue, mark=x, mark size=1.5pt]
        table[x=Records, y=AID_Time] {\scalingDataBuildings};

    \addplot[color=black, mark=o, mark size=1.5pt]
        table[x=Records, y=Tippe_Time] {\scalingDataBuildings};

\nextgroupplot[
    ylabel={Output Size (MB) (Log)},
    ylabel style={yshift=-6pt}, 
    title={Storage},
    xlabel={Number of Records (Log)}
]
    \addplot[color=red, line width=0.8pt, mark=square*, mark size=1.2pt]
        table[x=Records, y=Hi5_Size] {\scalingDataBuildings};

    \addplot[color=blue, mark=x, mark size=1.5pt]
        table[x=Records, y=AID_Size] {\scalingDataBuildings};

    \addplot[color=black, mark=o, mark size=1.5pt]
        table[x=Records, y=Tippe_Size] {\scalingDataBuildings};

\end{groupplot}

\matrix[
  matrix of nodes,
  anchor=north,
  nodes={font=\small, inner sep=1pt},
  column sep=6pt
] at ($(group c1r1.south)!0.5!(group c2r1.south) - (0,0.8cm)$) {

  {\tikz{\draw[red, line width=0.8pt] (0,0) -- (0.3,0);
         \fill[red] (0.12,-0.03) rectangle +(0.06,0.06);}}; & \node{Hi5}; &

  {\tikz{\draw[blue] (0,0) -- (0.3,0);
         \draw[blue] (0.12,-0.03) -- (0.18,0.03)
                      (0.12,0.03) -- (0.18,-0.03);}}; & \node{AID*}; &

  {\tikz{\draw[black] (0,0) -- (0.3,0);
         \draw[black] (0.15,0) circle (0.04);}}; & \node{Tippecanoe}; \\
};
\end{tikzpicture}
\caption{Scaling comparison across buildings subsets (45k to 455M records). Both axes are logarithmic.}
\label{fig:scaling_comparison-buildings}
\end{figure}
}
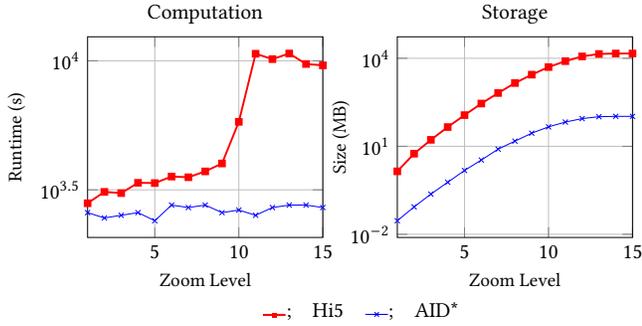
\begin{figure}[t]
\centering
\pgfplotstableread{
Zoom LP_Time AID_Time LP_Size_MB AID_Size_MB
1    2805.0  2580.0   1.4         0.029
2    3105.0  2460.0   5.6         0.086
3    3071.0  2520.0   16.6        0.235
4    3367.0  2580.0   45.3        0.595
5    3359.0  2400.0   115.6       1.500
6    3560.0  2760.0   286.7       3.400
7    3538.0  2700.0   654.8       8.000
8    3728.0  2760.0   1433.0      15.000
9    4001.0  2580.0   2764.6      28.000
10   5801.0  2640.0   5017.6      46.000
11   10653.0 2520.0   7987.6      67.000
12   10153.0 2700.0   11571.0     88.000
13   10675.0 2760.0   14028.4     102.000
14   9720.0  2760.0   14540.2     105.000
15   9605.0  2700.0   14540.2      105.000
}\perfData

\begin{tikzpicture}[scale=0.9]
\begin{groupplot}[
    group style={
        group size=2 by 1,
        horizontal sep=1.1cm,
        xlabels at=edge bottom,
        xticklabels at=edge bottom
    },
    width=0.41\columnwidth,
    height=0.35\columnwidth,
    scale only axis,
    xmin=1, xmax=15,
    ymode=log,
    grid=major,
    tick label style={font=\normalsize},
    label style={font=\small},
    title style={font=\normalsize, yshift=-1ex}
]

\nextgroupplot[
    ylabel={Runtime (s)},
    title={Computation},
    xlabel={Zoom Level}
]
    \addplot[color=red, line width=0.8pt, mark=square*, mark size=1.2pt]
        table[x=Zoom, y=LP_Time] {\perfData};

    \addplot[color=blue, mark=x, mark size=1.5pt]
        table[x=Zoom, y=AID_Time] {\perfData};

\nextgroupplot[
    ylabel={Size (MB)},
    ylabel style={yshift=-6pt}, 
    title={Storage},
    xlabel={Zoom Level}
]
    \addplot[color=red, line width=0.8pt, mark=square*, mark size=1.2pt]
        table[x=Zoom, y=LP_Size_MB] {\perfData};

    \addplot[color=blue, mark=x, mark size=1.5pt]
        table[x=Zoom, y=AID_Size_MB] {\perfData};

\end{groupplot}

\matrix[
  matrix of nodes,
  anchor=north,
  nodes={font=\small, inner sep=1pt},
  column sep=6pt
] at ($(group c1r1.south)!0.5!(group c2r1.south) - (0,0.8cm)$) {

  {\tikz{\draw[red, line width=0.8pt] (0,0) -- (0.3,0);
         \fill[red] (0.12,-0.03) rectangle +(0.06,0.06);}}; & \node{Hi5}; &

  {\tikz{\draw[blue] (0,0) -- (0.3,0);
         \draw[blue] (0.12,-0.03) -- (0.18,0.03)
                      (0.12,0.03) -- (0.18,-0.03);}}; & \node{AID*}; \\
};
\end{tikzpicture}
\caption{Performance comparison across zoom levels 1--15 for eBird dataset. Runtime and output size are shown on a logarithmic scale for the LP-based method (Hi5) and AID*.}
\label{fig:perf-zoom-singlecol}
\end{figure}

This section evaluates end-to-end tile generation running time, storage footprint, and visualization fidelity for \hifive against Tippecanoe and AID*.
Figure \ref{fig:scaling_comparison} 
shows runtimes and storage footprint of the three largest datasets, eBird, OSM\_Roads, and OSM\_Buildings, while scaling input size from 0.001\% to 100\%. For each setting, we generate a pyramid of eight zoom levels (close to 20,000 tiles in total) and report the total output size. Since \hifive and AID* use Spark, both scale to the largest dataset. Notably, \hifive processes nearly a terabyte dataset in less than an hour. The resulting tiles can be used to generate any number of styles in real-time with no overhead on the server.
In contrast, because Tippecanoe is optimized for a single machine, it shows good performance only for small datasets and runs out of memory for larger ones. Output size increases with dataset scale for all methods as more tiles become populated; raster tiles (AID*) remain substantially smaller because they encode only pixels rather than geometries and attributes. Finally, even under the same nominal tile budget, \hifive and Tippecanoe yield different realized sizes due to fundamentally different reduction strategies and size models.

\autoref{fig:perf-zoom-singlecol} highlights the effect of the pyramid depth varies from 1 to 15 on the runtime and storage size for the eBird dataset (Tippecanoe is omitted as it runs out of memory). As expected, the unstyled raster tiles of AID* exhibit the smallest size and fastest runtime at the expense of not allowing any client-side styling. The runtime and size of \hifive initially increase due to the large number of tiles but they eventually stabilize as the size of each tile is upper-bounded.

Figure 
~\ref{fig:roads-metrics} evaluates the visualization quality on stylized rendering for roads. Due to Tippecanoe's scalability issues, we run this experiment on a 1\% sample. Both \hifive and Tippecanoe are configured with the same target tile budget, $B=256\text{kb}$. The triage step in \hifive was configured with a capacity of 100k cells and prioritize geometries with more vertices. 
The coloring styles applied to the roads dataset were \emph{categorical} based on the ``highway" attribute, and \emph{gradient} based on the ``max speed'' attribute. We observed consistent trends across other datasets, but omit their results due to space limitations.
Across both styles, \hifive delivers the highest fidelity, i.e., highest SSIM and PSNR, and lowest RMSE. The difference is most prominent for SSIM which correlates more closely with perceived visual structure which indicates that the visualization-aware reduction in \hifive is better at preserving visualization characteristics. We notice that PSNR and RMSE show the same behavior and they can also be used to assess visualization quality to some limit. Finally, AID* is not competitive with these metrics because it creates raster tiles with no styling options for users.

\mycomment{
\begin{figure}[t]
\centering

\pgfplotstableread{
 z  mean_ssim  mean_rmse  mean_psnr
  0  0.780902  45.940070  14.886971
  1  0.816083  38.961508  17.183405
  2  0.772037  50.347769  14.761022
  3  0.857523  46.204386  16.687021
  4  0.889084  46.003724  16.662720
  5  0.846260  44.610703  19.151280
  6  0.898874  37.417179  21.873629
  7  0.898385  34.949677  23.640131
  8  0.966008  19.856926  28.629486
  9  0.997118   5.993157  35.666923
 10  0.999729   2.220182  41.054757
 11  0.999941   0.761526  46.493801
 12  0.999977   0.217519  51.601769
 13  0.999991   0.047245  54.836429
 14  0.999996   0.015005  53.153912
}\CatTableHiFiveEbird

\pgfplotstableread{
 z  mean_ssim  mean_rmse  mean_psnr
  0  0.021436  58.826355  12.739365
  1  0.038746  48.908505  14.897462
  2  0.023376  66.196149  12.299277
  3  0.046393  64.971532  13.447445
  4  0.044996  66.613964  12.831697
  5  0.033556  65.029741  12.674822
  6  0.035644  62.287166  13.246814
  7  0.039405  59.193752  13.653544
  8  0.047937  53.029272  14.554246
  9  0.074649  48.532693  15.690881
 10  0.093418  42.227277  17.114800
 11  0.118614  36.415519  18.544418
 12  0.156310  28.726783  20.470895
 13  0.171473  23.759407  21.427306
 14  0.262488  17.284003  24.366675
}\CatTableTippeEbird

\pgfplotstableread{
 z  mean_ssim  mean_rmse  mean_psnr
  0  0.000003  244.008517   0.382704
  1  0.000004  246.311273   0.304138
  2  0.000003  240.147785   0.533234
  3  0.000004  239.243949   0.568266
  4  0.000004  237.099847   0.653247
  5  0.000004  235.375825   0.727310
  6  0.000005  234.286693   0.767996
  7  0.000004  235.591626   0.711084
  8  0.000005  239.957419   0.541221
  9  0.000005  244.246289   0.381031
 10  0.000006  247.412105   0.265979
 11  0.000007  249.516406   0.190141
 12  0.000008  250.753876   0.146316
 13  0.000007  251.163501   0.131845
}\CatTableAidEbird

\pgfplotstableread{
 z  mean_ssim  mean_rmse  mean_psnr
  0  0.950918  15.631374  24.250860
  1  0.895844  29.002868  18.882828
  2  0.774641  37.718735  17.233246
  3  0.802148  42.127857  15.671348
  4  0.811646  43.632627  15.195943
  5  0.798224  44.125239  15.063249
  6  0.810920  43.004724  15.322724
  7  0.853732  34.537120  17.077312
  8  0.946634  15.341416  24.463512
  9  0.971891  10.013539  29.648691
 10  0.975014   8.984498  30.564721
 11  0.975897   8.655264  30.889646
 12  0.976720   8.289485  37.112174
 13  0.993134   4.116203  43.657766
 14  0.996754   2.275518  48.970482
}\GradTableHiFiveEbird

\pgfplotstableread{
 z  mean_ssim  mean_rmse  mean_psnr
  0  0.966523  13.022653  25.836814
  1  0.906819  28.196069  19.132453
  2  0.781872  37.387701  17.358699
  3  0.810800  42.296435  15.648225
  4  0.819173  43.976976  15.126610
  5  0.806193  44.544484  14.972836
  6  0.819096  43.101625  15.303195
  7  0.861304  34.567570  17.085279
  8  0.952718  15.642799  24.274251
  9  0.976508  10.170100  29.512715
 10  0.979476   9.087933  30.477505
 11  0.980101   8.724727  30.830616
 12  0.905001  18.668525  24.625141
 13  0.941983  13.520820  27.359196
 14  0.957727  11.265492  28.915681
}\GradTableTippeEbird

\pgfplotstableread{
 z  mean_ssim  mean_rmse  mean_psnr
  0  0.000097  254.274335   0.024753
  1  0.000092  251.818692   0.109051
  2  0.000079  247.820884   0.249610
  3  0.000082  247.581308   0.256559
  4  0.000073  241.320155   0.481674
  5  0.000072  240.237750   0.520594
  6  0.000073  240.944112   0.494979
  7  0.000078  244.758985   0.356399
  8  0.000091  251.333442   0.125786
  9  0.000096  253.825904   0.040202
 10  0.000094  253.557514   0.049388
 11  0.000093  253.358477   0.056181
 12  0.000090  252.546375   0.084510
 13  0.000094  253.760772   0.042400
 14  0.000095  254.069106   0.031819
}\GradTableAidEbird

\pgfplotsset{
  myCycle/.style={
    cycle list={
      {thick, blue,            mark=*,          mark size=1.6pt},
      {thick, red,  dashed,    mark=square*,    mark size=1.6pt},
      {thick, green!60!black, dotted, mark=triangle*, mark size=1.8pt},
    }
  },
  safeplot/.style={
    unbounded coords=jump,
    restrict x to domain=0:14
  }
}

\newcommand{\AddCatLabelsEbird}[1]{%
  \addplot+ [safeplot] table[x=z, y=#1] {\CatTableHiFiveEbird};
  \addplot+ [safeplot] table[x=z, y=#1] {\CatTableTippecanoeEbird};
  \addplot+ [safeplot] table[x=z, y=#1] {\CatTableAidEbird};
}

\newcommand{\AddGradLabelsEbird}[1]{%
  \addplot+ [safeplot] table[x=z, y=#1] {\GradTableHiFiveEbird};
  \addplot+ [safeplot] table[x=z, y=#1] {\GradTableTippecanoeEbird};
  \addplot+ [safeplot] table[x=z, y=#1] {\GradTableAidEbird};
}

\pgfplotsset{
  ebirdStyle/.style={
    unbounded coords=jump,
    restrict x to domain=0:14,
    grid=major,
    tick label style={font=\tiny},
    label style={font=\tiny},
    title style={font=\scriptsize, yshift=-1.5ex},
    xlabel={Zoom},
    enlarge x limits=false,
  }
}

\begin{tikzpicture}[scale=0.9]
\begin{groupplot}[
    group style={
        group size=2 by 3,         
        vertical sep=1.1cm,
        horizontal sep=1.1cm,
    },
    width=0.42\columnwidth,        
    height=0.35\columnwidth,
    scale only axis,
    ebirdStyle
]

\nextgroupplot[title={CAT: SSIM}, ylabel={SSIM}]
    \addplot[red, thick, mark=square*, mark size=1pt] table[x=z, y=mean_ssim] {\CatTableHiFiveEbird};
    \addplot[blue, mark=x, mark size=1.3pt] table[x=z, y=mean_ssim] {\CatTableAidEbird};
    \addplot[black, mark=o, mark size=1pt] table[x=z, y=mean_ssim] {\CatTableTippeEbird};

\nextgroupplot[title={GRAD: SSIM}]
    \addplot[red, thick, mark=square*, mark size=1pt] table[x=z, y=mean_ssim] {\GradTableHiFiveEbird};
    \addplot[blue, mark=x, mark size=1.3pt] table[x=z, y=mean_ssim] {\GradTableAidEbird};
    \addplot[black, mark=o, mark size=1pt] table[x=z, y=mean_ssim] {\GradTableTippeEbird};

\nextgroupplot[title={CAT: RMSE}, ylabel={RMSE}]
    \addplot[red, thick, mark=square*, mark size=1pt] table[x=z, y=mean_rmse] {\CatTableHiFiveEbird};
    \addplot[blue, mark=x, mark size=1.3pt] table[x=z, y=mean_rmse] {\CatTableAidEbird};
    \addplot[black, mark=o, mark size=1pt] table[x=z, y=mean_rmse] {\CatTableTippeEbird};

\nextgroupplot[title={GRAD: RMSE}]
    \addplot[red, thick, mark=square*, mark size=1pt] table[x=z, y=mean_rmse] {\GradTableHiFiveEbird};
    \addplot[blue, mark=x, mark size=1.3pt] table[x=z, y=mean_rmse] {\GradTableAidEbird};
    \addplot[black, mark=o, mark size=1pt] table[x=z, y=mean_rmse] {\GradTableTippeEbird};

\nextgroupplot[title={CAT: PSNR}, ylabel={PSNR}]
    \addplot[red, thick, mark=square*, mark size=1pt] table[x=z, y=mean_psnr] {\CatTableHiFiveEbird};
    \addplot[blue, mark=x, mark size=1.3pt] table[x=z, y=mean_psnr] {\CatTableAidEbird};
    \addplot[black, mark=o, mark size=1pt] table[x=z, y=mean_psnr] {\CatTableTippeEbird};

\nextgroupplot[title={GRAD: PSNR}]
    \addplot[red, thick, mark=square*, mark size=1pt] table[x=z, y=mean_psnr] {\GradTableHiFiveEbird};
    \addplot[blue, mark=x, mark size=1.3pt] table[x=z, y=mean_psnr] {\GradTableAidEbird};
    \addplot[black, mark=o, mark size=1pt] table[x=z, y=mean_psnr] {\GradTableTippeEbird};

\end{groupplot}

\matrix[
  matrix of nodes,
  anchor=north,
  nodes={font=\footnotesize, inner sep=1pt},
  column sep=6pt
] at ($(group c1r3.south)!0.5!(group c2r3.south) - (0,0.8cm)$) {
  \node{\tikz{\draw[red, thick] (0,0) -- (0.3,0); \fill[red] (0.12,-0.03) rectangle +(0.06,0.06);}}; & \node{Hi5}; &
  \node{\tikz{\draw[blue] (0,0) -- (0.3,0); \draw[blue] (0.12,-0.03) -- (0.18,0.03) (0.12,0.03) -- (0.18,-0.03);}}; & \node{AID*}; &
  \node{\tikz{\draw[black] (0,0) -- (0.3,0); \draw[black] (0.15,0) circle (0.04);}}; & \node{Tippe}; \\
};
\end{tikzpicture}
\caption{Per-zoom mean metrics for CAT vs GRAD render styles (eBird).}
\label{fig:ebird-metrics}
\end{figure}
}
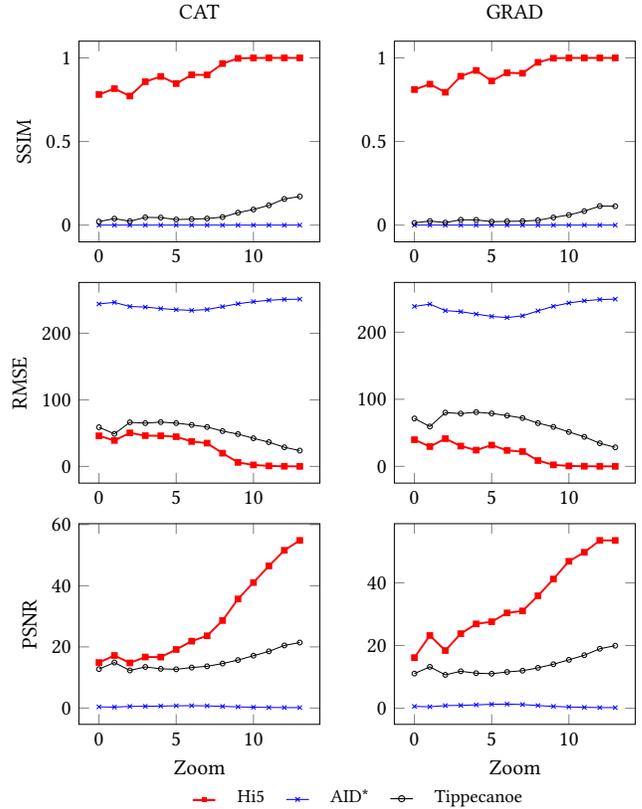
\begin{figure}[t]
\centering


\pgfplotstableread{
 z  mean_ssim  mean_rmse  mean_psnr
 0   0.780902  45.940070  14.886971
 1   0.816083  38.961508  17.183405
 2   0.772037  50.347769  14.761022
 3   0.857523  46.204386  16.687021
 4   0.889084  46.003724  16.662720
 5   0.846260  44.610703  19.151280
 6   0.898874  37.417179  21.873629
 7   0.898385  34.949677  23.640131
 8   0.966008  19.856926  28.629486
 9   0.997118   5.993157  35.666923
10   0.999729   2.220182  41.054757
11   0.999941   0.761526  46.493801
12   0.999977   0.217519  51.601769
13   0.999991   0.047245  54.836429
}\CatTableHiFive

\pgfplotstableread{
 z  mean_ssim  mean_rmse  mean_psnr
 0   0.021436  58.826355  12.739365
 1   0.038746  48.908505  14.897462
 2   0.023376  66.196149  12.299277
 3   0.046393  64.971532  13.447445
 4   0.044996  66.613964  12.831697
 5   0.033556  65.029741  12.674822
 6   0.035644  62.287166  13.246814
 7   0.039405  59.193752  13.653544
 8   0.047937  53.029272  14.554246
 9   0.074649  48.532693  15.690881
10   0.093418  42.227277  17.114800
11   0.118614  36.415519  18.544418
12   0.156310  28.726783  20.470895
13   0.171473  23.759407  21.427306
}\CatTableTippecanoe

\pgfplotstableread{
 z  mean_ssim  mean_rmse  mean_psnr
 0   0.000003 244.008517   0.382704
 1   0.000004 246.311273   0.304138
 2   0.000003 240.147785   0.533234
 3   0.000004 239.243949   0.568266
 4   0.000004 237.099847   0.653247
 5   0.000004 235.375825   0.727310
 6   0.000005 234.286693   0.767996
 7   0.000004 235.591626   0.711084
 8   0.000005 239.957419   0.541221
 9   0.000005 244.246289   0.381031
10   0.000006 247.412105   0.265979
11   0.000007 249.516406   0.190141
12   0.000008 250.753876   0.146316
13   0.000007 251.163501   0.131845
}\CatTableAid

\pgfplotstableread{
 z  mean_ssim  mean_rmse  mean_psnr
 0   0.810615  39.729452  16.148552
 1   0.843024  29.502681  23.196799
 2   0.794637  41.297517  18.403115
 3   0.890373  30.260097  23.790285
 4   0.925093  24.176978  26.899464
 5   0.862178  31.684563  27.588529
 6   0.911075  23.782597  30.398285
 7   0.908366  22.083455  31.033846
 8   0.973765   8.664229  35.867159
 9   0.998164   2.244576  41.205077
10   0.999977   0.633948  46.868623
11   0.999994   0.193715  49.729197
12   0.999997   0.052398  53.523406
13   1.000000   0.002973  53.502570
}\GradTableHiFive

\pgfplotstableread{
 z  mean_ssim  mean_rmse  mean_psnr
 0   0.013212  71.372619  11.060171
 1   0.024553  59.357063  13.207958
 2   0.014972  80.154017  10.651854
 3   0.032063  78.685402  11.789709
 4   0.031085  80.761968  11.181764
 5   0.020872  78.861781  11.014970
 6   0.022553  75.670541  11.579186
 7   0.023919  72.036698  11.977106
 8   0.028940  64.438586  12.889001
 9   0.045783  58.887017  14.038394
10   0.060751  51.252730  15.441699
11   0.083757  44.039718  16.915715
12   0.113675  34.381359  18.964881
13   0.113146  28.316804  19.951912
}\GradTableTippecanoe

\pgfplotstableread{
 z  mean_ssim  mean_rmse  mean_psnr
 0   0.000002 238.310689   0.587933
 1   0.000002 241.813652   0.468604
 2   0.000002 232.080195   0.851860
 3   0.000003 230.542643   0.917745
 4   0.000003 226.925392   1.079903
 5   0.000003 223.490834   1.266681
 6   0.000003 221.761269   1.326880
 7   0.000003 224.351339   1.185848
 8   0.000003 231.708720   0.868708
 9   0.000003 238.536555   0.597685
10   0.000004 243.472079   0.410936
11   0.000004 246.709268   0.290293
12   0.000005 248.596694   0.222007
13   0.000004 249.194236   0.200469
}\GradTableAid

\pgfplotsset{
  myCycle/.style={
    cycle list={
      {thick, blue,            mark=*,          mark size=1.8pt},
      {thick, red,  dashed,    mark=square*,    mark size=1.8pt},
      {thick, green!60!black, dotted, mark=triangle*, mark size=2.0pt},
    }
  },
  safeplot/.style={
    unbounded coords=jump,
    restrict x to domain=0:14
  }
}
\begin{tikzpicture}[scale=0.9]
\begin{groupplot}[
    group style={
        group size=2 by 3,         
        vertical sep=0.6cm,
        horizontal sep=1.1cm,
    },
    width=0.42\columnwidth,
    height=0.35\columnwidth,
    scale only axis,
]

\nextgroupplot[
    title={CAT},
    ylabel={SSIM},
    xlabel={}
]
    \addplot[red, thick, mark=square*, mark size=1pt] table[x=z, y=mean_ssim] {\CatTableHiFive};
    \addplot[blue, mark=x, mark size=1.3pt] table[x=z, y=mean_ssim] {\CatTableAid};
    \addplot[black, mark=o, mark size=1pt] table[x=z, y=mean_ssim] {\CatTableTippecanoe};

\nextgroupplot[
    title={GRAD},
    ylabel={},
    xlabel={}
]
    \addplot[red, thick, mark=square*, mark size=1pt] table[x=z, y=mean_ssim] {\GradTableHiFive};
    \addplot[blue, mark=x, mark size=1.3pt] table[x=z, y=mean_ssim] {\GradTableAid};
    \addplot[black, mark=o, mark size=1pt] table[x=z, y=mean_ssim] {\GradTableTippecanoe};

\nextgroupplot[
    title={},
    ylabel={RMSE},
    xlabel={}
]
    \addplot[red, thick, mark=square*, mark size=1pt] table[x=z, y=mean_rmse] {\CatTableHiFive};
    \addplot[blue, mark=x, mark size=1.3pt] table[x=z, y=mean_rmse] {\CatTableAid};
    \addplot[black, mark=o, mark size=1pt] table[x=z, y=mean_rmse] {\CatTableTippecanoe};

\nextgroupplot[
    title={},
    ylabel={},
    xlabel={}
]
    \addplot[red, thick, mark=square*, mark size=1pt] table[x=z, y=mean_rmse] {\GradTableHiFive};
    \addplot[blue, mark=x, mark size=1.3pt] table[x=z, y=mean_rmse] {\GradTableAid};
    \addplot[black, mark=o, mark size=1pt] table[x=z, y=mean_rmse] {\GradTableTippecanoe};

\nextgroupplot[
    title={},
    ylabel={PSNR},
    xlabel={Zoom}
]
    \addplot[red, thick, mark=square*, mark size=1pt] table[x=z, y=mean_psnr] {\CatTableHiFive};
    \addplot[blue, mark=x, mark size=1.3pt] table[x=z, y=mean_psnr] {\CatTableAid};
    \addplot[black, mark=o, mark size=1pt] table[x=z, y=mean_psnr] {\CatTableTippecanoe};

\nextgroupplot[
    title={},
    ylabel={},
    xlabel={Zoom}
]
    \addplot[red, thick, mark=square*, mark size=1pt] table[x=z, y=mean_psnr] {\GradTableHiFive};
    \addplot[blue, mark=x, mark size=1.3pt] table[x=z, y=mean_psnr] {\GradTableAid};
    \addplot[black, mark=o, mark size=1pt] table[x=z, y=mean_psnr] {\GradTableTippecanoe};

\end{groupplot}

\matrix[
  matrix of nodes,
  anchor=north,
  nodes={font=\footnotesize, inner sep=1pt},
  column sep=6pt
] at ($(group c1r3.south)!0.5!(group c2r3.south) - (0,0.8cm)$) {
  \node{\tikz{\draw[red, thick] (0,0) -- (0.3,0); \fill[red] (0.12,-0.03) rectangle +(0.06,0.06);}}; & \node{Hi5}; &
  \node{\tikz{\draw[blue] (0,0) -- (0.3,0); \draw[blue] (0.12,-0.03) -- (0.18,0.03) (0.12,0.03) -- (0.18,-0.03);}}; & \node{AID*}; &
  \node{\tikz{\draw[black] (0,0) -- (0.3,0); \draw[black] (0.15,0) circle (0.04);}}; & \node{Tippecanoe}; \\
};
\end{tikzpicture}
\caption{Per-zoom mean metrics for CAT vs GRAD render styles (roads).}
\label{fig:roads-metrics}
\end{figure}

\begin{figure}
    \centering

\pgfplotstableread{
thrKB  styleA  styleB  styleC
32     51.90   55.37   48.57
64     40.63   51.10   48.08
128    18.16   45.44   47.82
256     1.56   36.82   47.35
512     0.15   30.61   43.27
1024    0.0     0.0    14.77
2048    0.0     0.0    6.66
4096    0.0     0.0    0.0
8192    0.0     0.0    0.0
16384   0.0     0.0    0.0
}\thrsweep

\begin{tikzpicture}
\begin{groupplot}[
    group style={
        group size=2 by 1,
        horizontal sep=0.8cm,
        xlabels at=edge bottom,
    },
    width=0.38\columnwidth,
    height=0.36\columnwidth,
    scale only axis,
    grid=major,
    grid style={dashed, gray!30},
    tick label style={font=\small},
    label style={font=\small},
    title style={font=\small, yshift=-1ex}
]

\nextgroupplot[
    xlabel={Cells ($\times 10^3$)},
    ylabel={Mean runtime (ms)},
    xmin=45, xmax=105,
    ymin=400, ymax=1350,
    xtick={50, 75, 100},
    legend columns=2,
    legend style={at={(0.0,1.0)}, anchor=north west,font=\scriptsize},
]
    \addplot[color=red, mark=square*, thick, mark size=1pt]
        coordinates {(50,490.20) (60,634.46) (70,684.17) (80,753.75) (90,1040.83) (100,1020.45)};
    \addplot[color=blue, mark=x, thick, mark size=1.5pt]
        coordinates {(50,455.25) (60,655.23) (70,674.00) (80,768.75) (90,878.83) (100,1015.60)};
    \addplot[color=black, mark=o, thick, mark size=1pt]
        coordinates {(50,479.40) (60,631.23) (70,684.80) (80,1021.25) (90,1041.00) (100,1007.44)};
    \addplot[color=gray, mark=triangle*, thick, mark size=1.2pt]
        coordinates {(50,482.20) (60,563.31) (70,795.40) (80,799.25) (90,993.67) (100,1005.65)};
    \legend{32KB,64KB,128KB,256KB}

\nextgroupplot[
    xmode=log,
    log basis x=2,
    xlabel={Tile size budget, $B$ (KB)},
    ylabel={Mean RMSE},
    xmin=32, xmax=16384,
    ymin=0, ymax=70,
    xtick={32,64,128,256,512,1024,2048,4096,8192,16384},
    legend style={at={(1.0,1.0)}, anchor=north east,font=\scriptsize},
]
    \addplot[blue, solid, thick, mark=o, mark size=2.3pt]
        table[x=thrKB, y=styleA] {\thrsweep};

    \addplot[red, solid, thick, mark=square, mark size=2.3pt]
        table[x=thrKB, y=styleB] {\thrsweep};

    \addplot[green!60!black, solid, thick, mark=triangle, mark size=2.3pt]
        table[x=thrKB, y=styleC] {\thrsweep};

    \legend{Counties,NE\_roads,eBird (1\%)}
\end{groupplot}

\end{tikzpicture}
    \caption{Left: Runtime vs. number of input cells for MILP solver, for different output size constraints. Right: Mean RMSE after reduction across top 20 tiles with highest initial byte size using different size constraints.
    }
    \label{fig:sparsification-cells-time}
\end{figure}
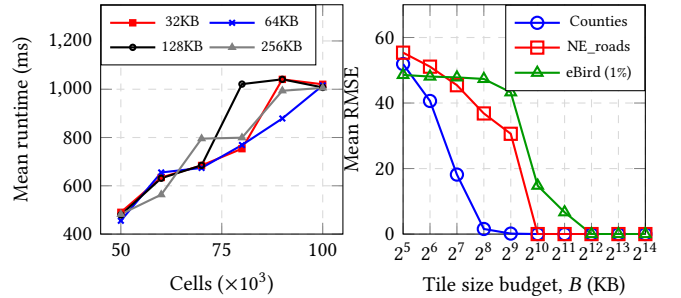

\begin{figure}
    \centering
    \pgfplotstableread{
alpha  styleA        styleB  styleC
0.0    5.47 65.02   49
0.2    1.89 31.30   43
0.4    1.83 28.10   43
0.6    1.64 30.26   43
0.8    1.55 30.80   43
1.0    5.09 35.80   48
}\alpharmse
\begin{tikzpicture}
\begin{groupplot}[
    group style={
        group size=3 by 1,
        horizontal sep=1.1cm
    },
    width=0.42\columnwidth,
    height=0.38\columnwidth,
    xlabel={$\alpha$},
    grid=major,
    tick label style={font=\scriptsize},
    label style={font=\small},
]

\nextgroupplot[
    title={Counties},
    ylabel={Mean RMSE},
    xmin=0, xmax=1,
    ymin=0, ymax=6
]
\addplot+[mark=o] table[x=alpha, y=styleA] {\alpharmse};

\nextgroupplot[
    title={Roads},
    xmin=0, xmax=1,
    ymin=25, ymax=70
]
\addplot+[mark=o] table[x=alpha, y=styleB] {\alpharmse};

\nextgroupplot[
    title={eBird},
    xmin=0, xmax=1,
    ymin=40, ymax=50
]
\addplot+[mark=o] table[x=alpha, y=styleC] {\alpharmse};

\end{groupplot}
\end{tikzpicture}
    \caption{Mean RMSE vs $\alpha$ for three datasets.}
    \label{fig:alpha_rmse}
\end{figure}
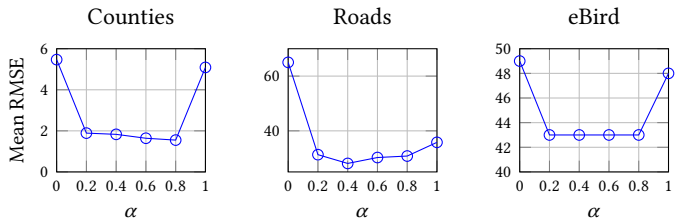

\mycomment{
\begin{figure}
    \centering

\pgfplotstableread{
thrKB  styleA  styleB  styleC
32     51.90   55.37   48.57
64     40.63   51.10   48.08
128    18.16   45.44   47.82
256     1.56   36.82   47.35
512     0.15   30.61   43.27
1024    0.0     0.0    14.77
2048    0.0     0.0    6.66
4096    0.0     0.0    0.0
8192    0.0     0.0    0.0
16384   0.0     0.0    0.0
}\thrsweep

\begin{tikzpicture}
\begin{axis}[
    width=0.95\linewidth,
    height=0.48\linewidth,
    xmode=log,
    log basis x=2,
    xlabel={Tile size constraint (KB)},
    ylabel={Mean RMSE},
    xtick={32,64,128,256,512,1024,2048,4096,8192,16384},
    xmin=32, xmax=16384, 
    ymin=0,
    grid=major,
    tick label style={font=\scriptsize},
    label style={font=\small},
    legend style={
        at={(0.5,-0.25)},     
        anchor=north,         
        font=\scriptsize,
        draw=none,
        fill=none,
        legend columns=-1,    
        column sep=10pt,      
    },
    legend cell align=left,
]

\addplot+[
    blue, solid, line width=1.5pt,
    mark=o, mark size=2.5pt
] table[x=thrKB, y=styleA] {\thrsweep};
\addlegendentry{Counties}

\addplot+[
    red, dashed, line width=1.5pt,
    mark=square, mark size=2.5pt
] table[x=thrKB, y=styleB] {\thrsweep};
\addlegendentry{NE\_roads}

\addplot+[
    green!60!black, dotted, line width=1.5pt,
    mark=triangle, mark size=2.5pt
] table[x=thrKB, y=styleC] {\thrsweep};
\addlegendentry{eBird 0.01}

\end{axis}
\end{tikzpicture}
\caption{Mean RMSE vs tile size constraint for three datasets/styles. Style A: County dataset, gradient styling based on water area. Style B: NE\_Roads\_NA dataset, gradient coloring based on length. 
}
\label{fig:size_rmse}
\end{figure}
}

\begin{table}[t]
\footnotesize
\caption{Spearman rank correlation between TLD and visualization error metrics across datasets. Higher TLD indicates larger deviation; thus correlations are positive for RMSE and negative for PSNR/SSIM.}
\centering
\renewcommand{\arraystretch}{1.15}
\begin{tabular}{lccc}
\hline
\textbf{Metric} & \textbf{Counties} & \textbf{NE\_Roads} & \textbf{ebird 0.01} \\
\hline
Spearman$(\mathrm{TLD}, \mathrm{RMSE})$ & 0.999823 & 0.813129 & 0.882601 \\
Spearman$(\mathrm{TLD}, \mathrm{PSNR})$ & -0.951515 & -0.813129 & -0.882601 \\
Spearman$(\mathrm{TLD}, \mathrm{SSIM})$ & -0.999867 & -0.797708 & -0.856090 \\
\hline
\end{tabular}
\label{tab:tld_spearman_corr_datasets}
\end{table}

To evaluate the utility of the tile-level distortion function (TLD) in \autoref{eq:tld}, \autoref{tab:tld_spearman_corr_datasets} shows the Spearman correlation coefficient between the three quality metrics above and the proposed TLD metric. We notice that TLD exhibits a very strong correlation with all three metrics when applied on stylized images. This is a very important finding because TLD is a style-free metric but it can be used in lieu of evaluating styled images. This means that researchers working on this problem in the future can focus on solving the proposed visualization aware tile reduction problem without having to tailor their method for a specific style.

\subsection{Sparsification Experiments}
All MILP instances are solved using the Gurobi Optimizer (11.0.2) \cite{gurobi} with a $1\%$ relative optimality gap.
This section studies the performance of the sparsification step and how it affects the quality of the generated tiles. To measure its performance, we run the full pyramid generation for eight levels and record each invocation of sparsification step (\autoref{alg:cell-sparsify}). \autoref{fig:sparsification-cells-time} on the left reports the average time as the number of cells ($N\times d$) varies from 50K to 100K. This range includes tiles that are larger than the budget ($B$) to trigger the sparsification algorithm and lower than the reservoir capacity of the triage step ($100$k cells).
As shown in figure, the running time grows roughly linear with the number of cells since it dictates the number of variables in the problem. In addition, it is observed that the tile budget $B$ has little to no effect on the overall running time since it does not affect the size of the problem. 



The right side of \autoref{fig:sparsification-cells-time} reports the mean error the largest 20 tiles for three datasets containing points, lines, and polygons. We vary the tile size budget ($B$) from 32~kb to 4~mb and measure its effect on the visualization error (RMSE). As expected, the error drops rapidly as the budget $B$ increases. This experiment gives guidance for system designers on how to choose the tile budget. Another dimension that affects the choice is the client-side capability and loading time but this is out of scope of our study.

Finally, \autoref{fig:alpha_rmse} illustrates the effect of the parameter $\alpha$ in \autoref{eq:milp-objective} on output tile quality. Recall that $\alpha$ balances the visual prominence, measured by the pixel footprint, and semantic information, quantified by the KL-divergence. The results show that the two extreme cases ($\alpha=0$ and $\alpha=1$) yield the worst performance as it focuses on one aspect only. It also shows that there is a wide range of values in which $\alpha$ can be set which makes it relatively simple for system designers to set this value.


\mycomment{
\begin{figure}
\pgfplotstableread{
Metric          Polygons   Lines   Points
Rand            16    50.4    48
Vert            12.9   43.6   49
Pix             12.7    43.7    49
BS(e)           15.6    44.1    46
BS(a)           13.1    44.4    49
Attr            15.1    45.7    47
}\rmseData

\pgfplotstableread{
Metric          Polygons  Lines   Points
Rand            149     66     10
Vert            146     72     10
Pix             159     75     20
BS(e)           186     66     16
BS(a)           142     71     12
Attr            146     66     11
}\timeData

\begin{tikzpicture}
    \pgfplotsset{
        my plot style/.style={
            width=0.95\columnwidth, 
            height=3.8cm,
            ybar=0pt,
            bar width=0.07in,
            enlarge x limits=0.15,
            xtick=data,
            tick label style={font=\tiny},
            label style={font=\tiny},
            grid=major,
            grid style={dashed, gray!30},
            ymin=0,
        }
    }

    \begin{axis}[
        name=toplot,
        my plot style,
        title={\scriptsize RMSE Comparison},
        ylabel={RMSE},
        xticklabels={}, 
    ]
        \addplot[fill=yellow!85!orange, draw=orange!60!black, postaction={pattern=north east lines}] 
            table[x expr=\coordindex, y=Points]{\rmseData};
        \addplot[fill=cyan!80, draw=cyan!50!black, postaction={pattern=crosshatch}] 
            table[x expr=\coordindex, y=Lines]{\rmseData};
        \addplot[fill=magenta!70, draw=magenta!50!black, postaction={pattern=dots}] 
            table[x expr=\coordindex, y=Polygons]{\rmseData};
    \end{axis}

    \begin{axis}[
        name=botplot,
        at={($(toplot.south) - (0,0.6cm)$)}, 
        anchor=north,
        my plot style,
        title={\scriptsize Runtime Comparison},
        title style={yshift=-1.2ex}, 
        ylabel={Time (ms)},
        xticklabels={Rand, Vert, Pix, BS(e), BS(a), Attr},
        xticklabel style={rotate=45, anchor=north east},
    ]
        \addplot[fill=yellow!85!orange, draw=orange!60!black, postaction={pattern=north east lines}] 
            table[x expr=\coordindex, y=Points]{\timeData};
        \addplot[fill=cyan!80, draw=cyan!50!black, postaction={pattern=crosshatch}] 
            table[x expr=\coordindex, y=Lines]{\timeData};
        \addplot[fill=magenta!70, draw=magenta!50!black, postaction={pattern=dots}] 
            table[x expr=\coordindex, y=Polygons]{\timeData};
    \end{axis}

    \matrix[
      matrix of nodes,
      anchor=north,
      nodes={font=\footnotesize, inner sep=1pt},
      column sep=5pt
    ] at ($(botplot.south) - (0,0.8cm)$) {
      \node{\tikz{\draw[fill=yellow!85!orange, postaction={pattern=north east lines}] (0,0) rectangle (0.2,0.1);}}; & \node{Points (eBird 0.001)}; &
      \node{\tikz{\draw[fill=cyan!80, postaction={pattern=crosshatch}] (0,0) rectangle (0.2,0.1);}}; & \node{Lines (NE Roads NA)}; &
      \node{\tikz{\draw[fill=magenta!70, postaction={pattern=dots}] (0,0) rectangle (0.2,0.1);}}; & \node{Counties (Polygons)}; \\
    };
\end{tikzpicture}
\caption{Impact of different prioritization metrics on visual fidelity and computation time.}
\label{fig:triage-priorities}
\end{figure}
}

\mycomment{\subsection{Triage Experiments}}

As seen in fig \ref{fig:sparsification-cells-time} the number of cells input to the sparsification step is the driving factor for the runtime of the MILP solver, so we set a fixed capacity of cells for tiles at the Triage step and aim to fill this budget of 100k cells with different possible priorities, as discussed in section \ref{sec:triage}. 
\mycomment{In fig \ref{fig:triage-priorities} the effect of using different priorities on three datasets is shown. The runtime is similar for number of vertices and number of (non-null) attributes as calculating these metrics for each record is more straightforward, and computing number of pixels requires more time, but the highest time belongs to byte size (exact) calculation as records need to be converted to byte arrays. An approximate method is also used for the record byte size which takes less time for datasets that contain more complex geometries. According to the geometry type of the dataset, a different priority for record-level triage may be most suitable, for example, pixel count (or vertices count as a faster proxy) for datasets that consist of polygons of varying sizes result in better quality output tiles. In points datasets, as all records have the same count of vertices (one) and pixel coverage, these priorities behave similar to random, and using record byte size as priority will result in keeping more records in the tile.}

\subsection{Qualitative Evaluation}

\autoref{fig:tile-qualitative-3tiles-1col} shows two representative tiles rendered with four methods. The first row depicts the Counties dataset with gradient coloring by the \texttt{AWATER} attribute. Under the same tile size constraint, both Tippecanoe and \hifive closely match the original, though the red circles highlight regions where Tippecanoe drops or coalesces records.
The second row shows the Roads dataset with categorical coloring based on the ``continent" attribute. Here, \hifive removes this attribute's values for some smaller records, causing them to appear in black, while Tippecanoe removes more records altogether (i.e., any retained record preserves all of its attributes). As a result, Tippecanoe's output is sparser.


\begin{figure}[t]
\centering
\setlength{\tabcolsep}{0pt}

\newcommand{\imgw}{0.24\columnwidth} 
\newcommand{\gap}{1pt}
\newcommand{\rowgap}{1.0mm}
\begin{tabular}{@{}c@{\hspace{\gap}}c@{\hspace{\gap}}c@{\hspace{\gap}}c@{}}

Ground Truth & Hi5 & Tippecanoe & AID* \\

\mycomment{
\includegraphics[width=\imgw]{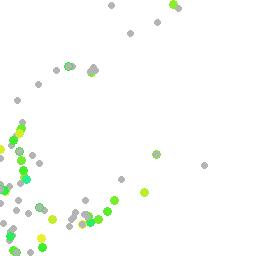} &
\includegraphics[width=\imgw]{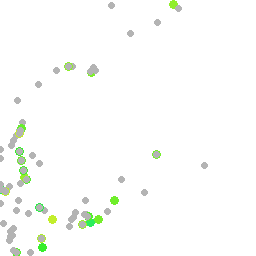} &
\includegraphics[width=\imgw]{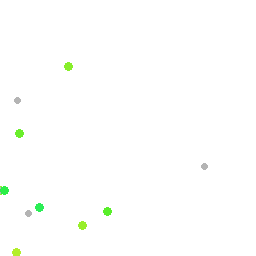} &
\includegraphics[width=\imgw]{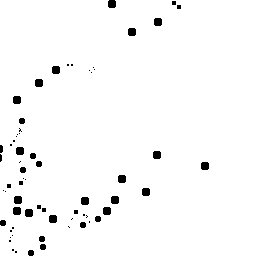} \\
\vspace{\rowgap} \\[-\rowgap]
}

\includegraphics[width=\imgw]{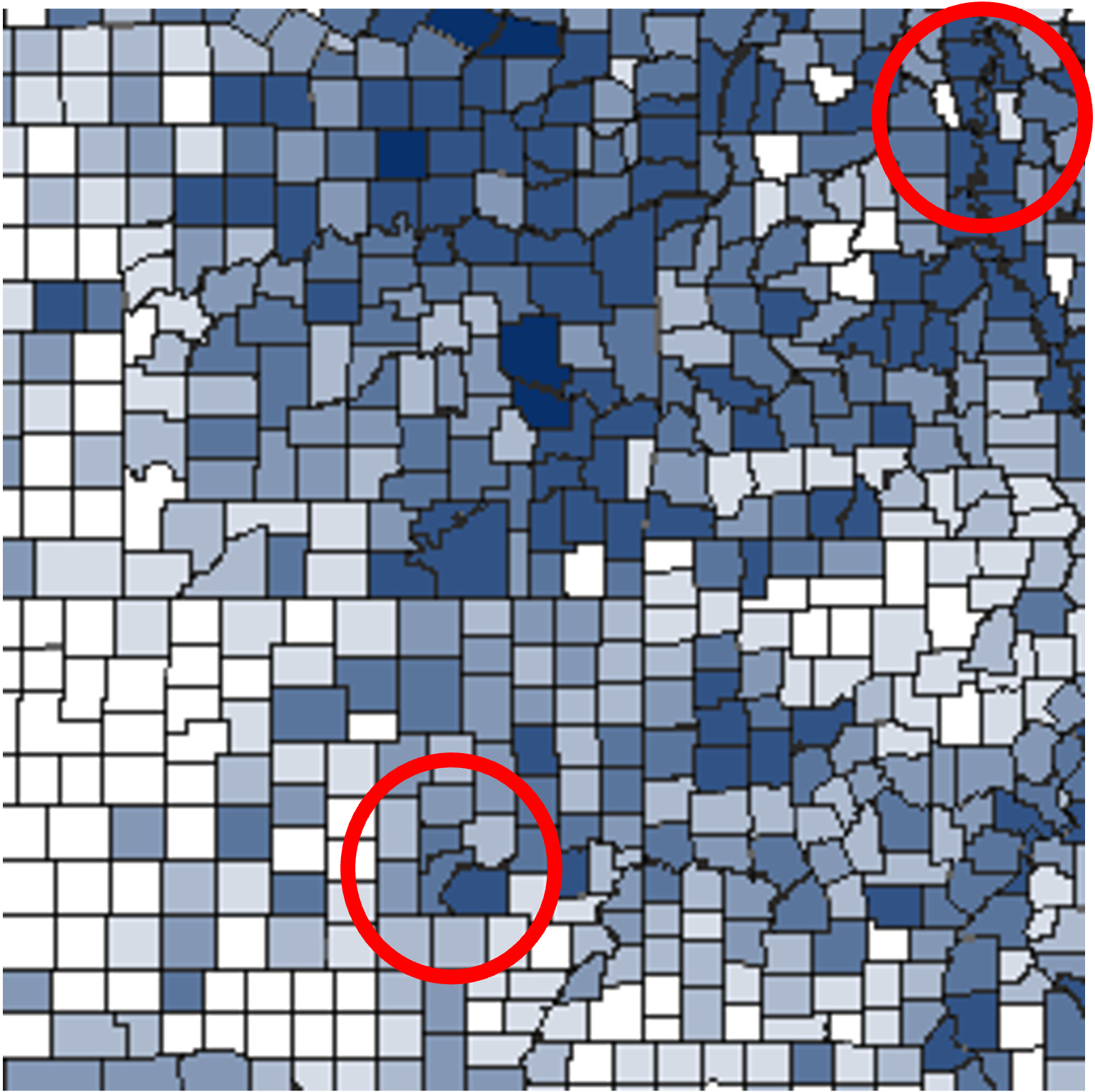} &
\includegraphics[width=\imgw]{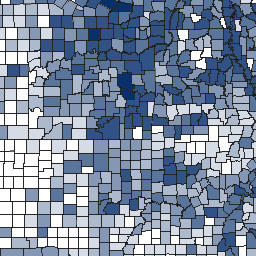} &
\includegraphics[width=\imgw]{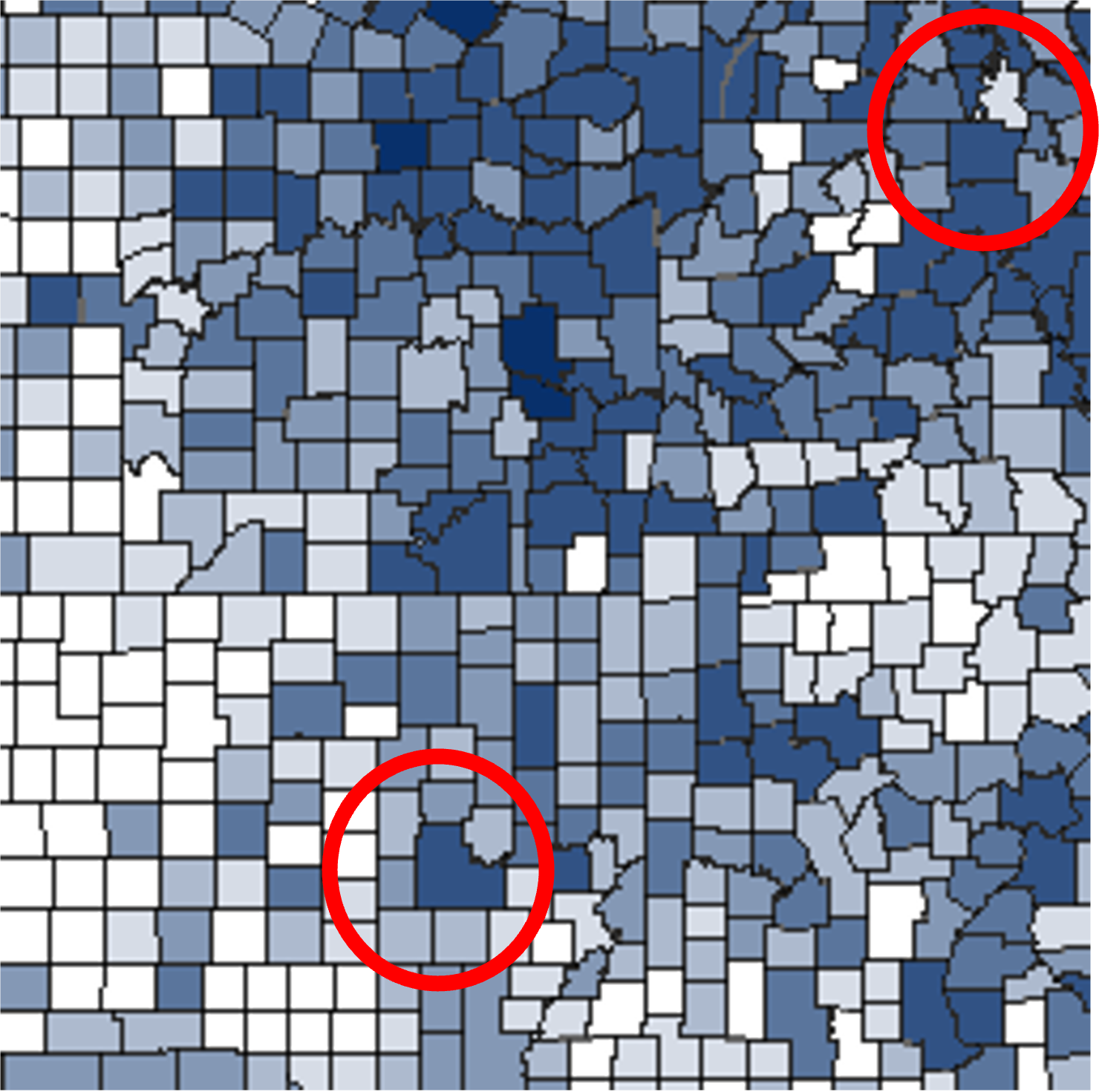} &
\includegraphics[width=\imgw]{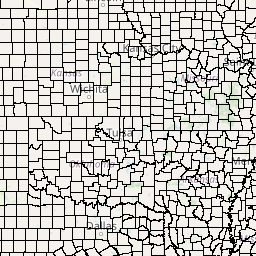} \\
\vspace{\rowgap} \\[-\rowgap]

\includegraphics[width=\imgw]{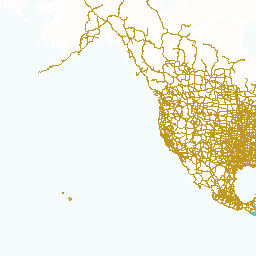} &
\includegraphics[width=\imgw]{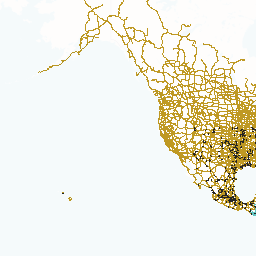} &
\includegraphics[width=\imgw]{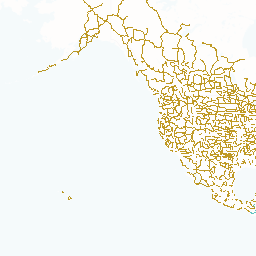} &
\includegraphics[width=\imgw]{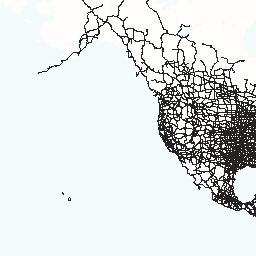} \\

\end{tabular}

\caption{Qualitative comparison for two tiles. Each row shows the same stylized tile with ground truth followed by Hi5, Tippecanoe, and AID*}
\vspace{-15pt}
\label{fig:tile-qualitative-3tiles-1col}
\end{figure}

\balance

\section{Conclusion}
\label{sec:conclusion}
This paper presented \textsc{HiFIVE}, a visualization-aware framework for reducing vector-tile sizes while preserving both visual salience and semantic fidelity for interactive map exploration.
By formulating tile reduction as an optimization problem, \textsc{HiFIVE} explicitly captures the trade-off between storage cost and visualization quality, jointly reasoning about geometric prominence and attribute information loss.
Our MILP-based sparsification selectively removes records and attribute values using a linear size model grounded in geometry complexity and dictionary-based encoding, while cell utilities derived from information-theoretic divergence ensure that retained values minimally alter attribute distributions.
Combined with a lightweight triage stage for scalability, \textsc{HiFIVE} produces compact, standard-compliant vector tiles that support client-side styling.
Given that the proposed style-free Tile-Level Distortion (TLD) metric correlates strongly with perceptual image quality, future work should explore alternative optimization algorithms that minimize this distortion.
\nocite{*}
\bibliographystyle{ACM-Reference-Format}
\bibliography{references}

\end{document}